\newcommand{\remove}[1]{}
\newtheorem{thm}{Theorem}[section]
\newtheorem{clm}[thm]{Claim}
\newtheorem{lem}[thm]{Lemma}
\newtheorem{define}[thm]{Definition}
\newtheorem{cor}[thm]{Corollary}
\newtheorem{conjecture}{Conjecture}
\newtheorem{THM}{Theorem}
\newtheorem{remark}[thm]{Remark}
\renewcommand{\remove}[1]{}
\newcommand{\eps}{{\varepsilon}}
\renewcommand{\l}{\left}
\renewcommand{\r}{\right}
\newcommand{\de}{{\delta}}
\newcommand{\comments}[1]{}
\newcommand{\rank}{\textnormal{rank}}
\newcommand{\ML}{\textnormal{ML}}
\renewcommand{\deg}{\textnormal{deg}}
\newcommand{\LDR}{\textnormal{LDR}}
\newcommand{\J}{\textnormal{J}}
\newcommand{\depth}{\textnormal{depth}}
\newcommand{\dist}{\textnormal{dist}}
\newcommand{\RM}{\textnormal{RM}}
\newcommand{\Poly}{\textnormal{Poly}}
\newcommand{\restate}[2]{\medskip \noindent {\bf #1.} {\sl #2}}
\def\F{{\mathbb{F}}}
\newcommand{\R}{\mathbb{R}}
\newcommand{\N}{\mathbb{N}}
\newcommand{\T}{\mathbb{T}}
\newcommand{\E}{\mathbb{E}}
\newcommand{\Z}{\mathbb{Z}}
\newcommand{\U}{\mathbb{U}}
\newcommand{\calP}{\mathcal{P}}
\newcommand{\calL}{\mathcal{L}}
\newcommand{\calH}{\mathcal{H}}
\newcommand{\B}{\mathcal{B}}
\newcommand{\C}{\mathbb{C}}
\renewcommand{\Pr}{\mathbf{Pr}}
\def\draft{0}   
    \def\ShowAuthNotes{1}
    \def\ShowAuthNotes{0}
\newcommand{\authnote}[2]{{ \footnotesize \bf{\color{red}[#1's Note: {\color{blue}#2}]}}}
\newcommand{\authnote}[2]{}
\begin{document}
\title{List decoding Reed-Muller codes over small fields}

\author{
Abhishek Bhowmick\thanks{Research supported in part by NSF Grant CCF-1218723.}\\
Department of Computer Science\\
The University of Texas at Austin\\
\texttt{bhowmick@cs.utexas.edu}
\and
Shachar Lovett \thanks{Supported by NSF CAREER award 1350481}\\
Department of Computer Science and Engineering\\
University of California, San Diego\\
\texttt{slovett@ucsd.edu}}

\maketitle
\thispagestyle{empty}
\begin{abstract}
The list decoding problem for a code asks for the maximal radius up to which any ball of that radius contains only a constant number of codewords. The list decoding radius is not well understood even for well studied codes, like Reed-Solomon or Reed-Muller codes.

Fix a finite field $\F$. The Reed-Muller code $\RM_{\F}(n,d)$ is defined by $n$-variate degree-$d$ polynomials over $\F$. In this work, we study the list decoding radius of Reed-Muller codes over a constant prime field $\F=\F_p$, constant degree $d$ and large $n$. We show that the list decoding radius is equal to the minimal distance of the code.

That is, if we denote by $\de(d)$ the normalized minimal distance of $\RM_{\F}(n,d)$, then the number of codewords in any ball of radius $\de(d)-\eps$ is bounded by $c=c(p,d,\eps)$ independent of $n$. This resolves a conjecture of Gopalan-Klivans-Zuckerman [STOC 2008],  who among other results proved it in the special case of $\F=\F_2$; and extends the work of Gopalan [FOCS 2010] who proved the conjecture in the case of $d=2$.

We also analyse the number of codewords in balls of radius exceeding the minimal distance of the code. For $e \leq d$, we show that the number of codewords of $\RM_{\F}(n,d)$ in a ball of radius $\de(e) - \eps$ is bounded by $\exp(c  \cdot  n^{d-e})$, where $c=c(p,d,\eps)$ is independent of $n$. The dependence on $n$ is tight. This extends the work of Kaufman-Lovett-Porat [IEEE Inf. Theory 2012] who proved similar bounds over $\F_2$.

The proof relies on several new ingredients: an extension of the Frieze-Kannan weak regularity to general function spaces, higher-order Fourier analysis, and an extension of the Schwartz-Zippel lemma to compositions of polynomials.
\end{abstract}

\newpage
\setcounter{page}{1}

\section{Introduction}
The concept of \emph{list decoding} was introduced by Elias \cite{Elias} and Wozencraft \cite{Woz} to decode \emph{error correcting codes} beyond half the minimum distance. The objective of list decoding is to output all the codewords within a specified radius around the received word. After the seminal results of Goldreich and Levin \cite{GL} and Sudan \cite{Sudan} which gave list decoding algorithms for the Hadamard code and the Reed-Solomon code respectively, there has been tremendous progress in designing list decodable codes. See the excellent surveys of Guruswami \cite{Venkat:book, Venkat:thesis} and Sudan \cite{Sudan:survey}.

List decoding has applications in many areas of computer science including hardness amplification in complexity theory \cite{STV, luca-xor}, derandomization \cite{Vad}, construction of hard core predicates from one way functions \cite{GL, AGS}, construction of extractors and pseudorandom generators \cite{TZS, SU} and computational learning \cite{KM,Jackson}. Despite so much progress, the largest radius up to which list decoding is tractable is still a fundamental open problem even for well studied codes like Reed-Solomon (univariate polynomials) and Reed-Muller codes (multivariate polynomials). The goal of this work is to analyse Reed-Muller codes over small fields and small degree.

Reed-Muller codes (RM codes) were discovered by Muller in 1954. Fix a finite field $\F=\F_q$. Let $d \in \N$. The RM code $\RM_{\F}(n,d)$ is defined as follows. The message space consists of degree $\leq d$ polynomials in $n$ variables over $\F$ and the codewords are evaluation of these polynomials on $\F^n$. Let $\de_{p}(d)$ denote the normalized distance of $\RM_{\F}(n,d)$. Let $d=a(q-1)+b$ where $0 \leq b<q-1$. We have
$$
\de_{\F}(d)=\frac{1}{q^a}\l(1-\frac{b}{q}\r).
$$

RM codes are one of the most well studied error correcting codes. Many of the applications in computer science involves low degree polynomials over small fields, namely RM codes. Given a received word $g:\F^n \rightarrow \F$ the objective is to output the list of codewords (e.g. low-degree polynomials) that lie within some distance of $g$. Typically we will be interested in regimes where list size is either independent of $n$ or polynomial in the block length $\F^n$.

\subsection{Previous Work} Let $\calP_{d}(\F^n)$ denote the class of degree $\leq d$ polynomials $f:\F^n \rightarrow \F$. Let $\dist$ denote the normalized Hamming distance.
For $\RM_{\F}(n,d)$, $\eta>0$, let
$$
\ell_{\F}(n,d,\eta):=\max_{g:\F^n \rightarrow \F}\l|\{f \in \calP_d(\F^n):\dist(f,g) \leq \eta\}\r|.
$$
Let $\LDR_{\F}(n,d)$ (short for \emph{list decoding radius}) be the maximum $\eta$ for which $\ell_{\F}(n,d,\eta-\eps)$ is upper bounded by a constant depending only on $\eps, |\F|,d$ for all $\eps>0$.

It is easy to see that $\LDR_{\F}(n,d)\leq \de_{\F}(d)$. The difficulty lies in proving a matching lower bound. The first breakthrough result was in the setting of $d=1$ over $\F_2$ (Hadamard Codes) where Goldreich and Levin showed that $\LDR_{\F_2}(n,1)=\de_{\F_2}(1)=1/2$ \cite{GL}. Later, Goldreich, Rubinfield and Sudan \cite{GRS} generalized the field to obtain $\LDR_{\F}(n,1)=\de_{\F}(1)=1-1/|\F|$. In the setting of $d<|\F|$, Sudan, Trevisan and Vadhan \cite{STV} showed that $\LDR_{\F}(n,d) \ge 1-\sqrt{2d/|\F|}$ improving previous work by Arora and Sudan \cite{AroraSudan}, Goldreich \emph{et al} \cite{GRS} and Pellikaan and Wu \cite{PellikaanWu}. A crucial result that was a bulding block in the multivariate setting was the problem of list decoding Reed-Solomon codes which was analysed by Sudan \cite{Sudan} and Guruswami and Sudan \cite{GuruswamiSudan}.
The list decoding radius obtained above essentially attains the Johnson radius, which is a radius such that
for any code over $\F$ with normalized minimum distance $\de$, the list decoding radius ($\LDR$) is at least
$$
\J_{\F}(\de):=\left(1-\frac{1}{|\F|}\right)\left(1-\sqrt{1-\frac{|\F|\de}{|\F|-1}}\right).
$$
There have been few results that show list decodability beyond the Johnson radius \cite{DGKS08, GKZ08}.

In 2008, Gopalan, Klivans and Zuckerman \cite{GKZ08} showed that $\LDR_{\F_2}(n,d)=\de_{\F_2}(d)$. This beats the Johnson radius already for $d \geq 2$. The list decoding algorithm in \cite{GKZ08} is a generalization of the Goldreich-Levin algorithm \cite{GL}. However their algorithm crucially depends on the fact that the ratio of minimum distance to unique decoding radius is equal to $2$ which is the size of the field. Therefore, it does not generalize to higher fields (except for some special cases). They pose the following conjecture.
\begin{conjecture}[\cite{GKZ08}]\label{conj:1}For all constants $d$ and all fields $\F$, $\LDR_{\F}(n,d)=\de_{\F}(d)$.
\end{conjecture}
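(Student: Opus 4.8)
The upper bound $\LDR_{\F}(n,d)\le\de_{\F}(d)$ is witnessed by $g\equiv 0$, since the codewords of relative weight exactly $\de_{\F}(d)$ lie in the ball of that radius and there are unboundedly many of them. The content is the matching lower bound: for every $\eps>0$ there is $c=c(p,d,\eps)$ so that any ball of radius $\de_{\F}(d)-\eps$ contains at most $c$ polynomials of degree $\le d$. So fix a received word $g:\F_p^n\to\F_p$ and degree-$\le d$ polynomials $f_1,\dots,f_M$ with $\dist(f_i,g)\le\de_{\F}(d)-\eps$; the plan is to bound $M$ directly. The basic identity is $\Pr_x[f_i(x)=g(x)]=\tfrac1p\sum_{t\in\F_p}\langle e_p(tf_i),e_p(tg)\rangle\ge 1-\de_{\F}(d)+\eps$, where $e_p(z)=\exp(2\pi i z/p)$ and $\langle a,b\rangle=\E_x[a(x)\overline{b(x)}]$.

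Step one replaces $g$ by a bounded-complexity object as seen by degree-$\le d$ polynomials. For each $t$, apply the Frieze--Kannan weak regularity lemma, transported to the function space spanned by the test class $\mathcal H=\{e_p(tf):t\in\F_p,\ f\in\calP_d(\F_p^n)\}$, to obtain $\Phi_t$ that is a bounded combination of members of $\mathcal H$ with $|\langle e_p(tf),e_p(tg)-\Phi_t\rangle|\le\eps'$ for every degree-$\le d$ polynomial $f$, the complexity of $\Phi_t$ depending only on $\eps'$. Since each $\Phi_t$ is a bounded combination of $e_p(\cdot)$'s of degree-$\le d$ polynomials, applying the polynomial regularity lemma of higher-order Fourier analysis to all of these polynomials at once and refining yields a single collection $\mathcal B=(B_1,\dots,B_K)$ of degree-$\le d$ polynomials of arbitrarily large rank, with $K$ bounded in terms of $\eps'$, together with $\Gamma_t:\F_p^K\to\C$ such that $\Phi_t=\Gamma_t(B_1(\cdot),\dots,B_K(\cdot))$ and $\Gamma_0\equiv 1$. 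Substituting back, $\Pr_x[f_i(x)=g(x)]=\E_x[\Theta(\mathcal B(x),f_i(x))]\pm\eps'$, where $\Theta(y,s)=\tfrac1p\sum_t e_p(ts)\overline{\Gamma_t(y)}$ is bounded and has mean $\E_{y,s}[\Theta]=\tfrac1p$.

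Step two is a dichotomy for each $f_i$. If $(B_1,\dots,B_K,f_i)$ is still of large rank, the near-equidistribution theorem for regular polynomial factors gives $|\E_x[\Theta(\mathcal B(x),f_i(x))]-\E_{y,s}[\Theta]|\le\eps'$ once the rank of $\mathcal B$ (and $n$) is large enough, so $\Pr_x[f_i(x)=g(x)]\le\tfrac1p+2\eps'$; as $\de_{\F}(d)<1-\tfrac1p$ for $d\ge 2$ (immediate from the stated formula, since $\de_{\F}(\cdot)$ is strictly decreasing and $\de_{\F}(1)=1-\tfrac1p$, with the case $d=1$ classical), taking $\eps'$ small relative to $1-\tfrac1p-\de_{\F}(d)$ contradicts $\Pr_x[f_i(x)=g(x)]\ge 1-\de_{\F}(d)+\eps$. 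Hence $(B_1,\dots,B_K,f_i)$ is not of large rank, which forces a low-rank linear dependence among top-degree parts and so a decomposition $f_i=\sum_j c^{(i)}_j B_j+f_i^\ast$ with $c^{(i)}\in\F_p^K$ and $f_i^\ast$ a degree-$\le d$ polynomial whose top-degree part has bounded rank. Grouping the $f_i$ by $c^{(i)}$ (at most $p^K$ groups), it remains to bound, for a fixed function $h$, the number of degree-$\le d$ polynomials $f^\ast$ within $\de_{\F}(d)-\eps$ of $h$ whose top-degree part has bounded rank.

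This last reduction is the heart of the matter and the expected main obstacle. Such an $f^\ast$ is genuinely ``lower-complexity'' but \emph{not} of degree $<d$, and in fact for balls of radius $\de_{\F}(e)-\eps$ with $e<d$ there really are $\exp(\Theta(n^{d-e}))$ such residuals --- which is precisely why the list size blows up there, and why the present statement is exactly the $e=d$ case of the general bound. For the minimal-distance radius one iterates the whole scheme on the residuals: re-regularize $h$ together with the boundedly many lower-degree factors occurring in the top part of $f^\ast$, peel $f^\ast$ down until only a genuine degree-$\le d-1$ polynomial survives, and invoke the degree-$(d-1)$ case --- legitimately, because $\de_{\F}(d-1)>\de_{\F}(d)$ means that lying within $\de_{\F}(d)-\eps$ of a fixed function lies within the radius $\de_{\F}(d-1)-\eps''$ covered by the induction. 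Making this iteration terminate with a bound independent of $n$ requires understanding the degree of compositions $\Gamma(P_1,\dots,P_m)$ of the polynomial pieces produced along the way, and in particular that only boundedly many composition patterns are compatible with total degree $\le d$; this is exactly the role of the extension of the Schwartz--Zippel lemma to compositions of polynomials, which also underlies the tightness of the exponent $d-e$ in the general statement.
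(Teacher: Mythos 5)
Your first half is sound and runs parallel to the paper's argument, just in character language: a Frieze--Kannan-type weak regularity against the class of degree-$\le d$ phase polynomials, followed by a strong (higher-order) regularization of the resulting bounded family, and an equidistribution dichotomy showing that any $f_i$ agreeing with $g$ noticeably better than $1/p$ must be structurally tied to the regular factor $\mathcal B$. Up to bookkeeping (nonclassical polynomials and depth when $d\ge p$, integer coefficients in the rank condition), this matches the paper's Corollary~\ref{cor:pseudorandom} plus Lemma~\ref{lem:reg}.

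The genuine gap is in your last step, which you yourself flag as ``the heart of the matter'' but never actually prove: bounding the number of residuals $f^\ast$ of bounded rank. Bounded rank means $f^\ast=\Gamma(P_1,\ldots,P_m)$ for boundedly many polynomials of degree $\le d-1$, but $f^\ast$ is still a degree-$d$ object, and the number of such bounded-rank degree-$d$ polynomials is $\exp(\Omega(n^{d-1}))$ --- the $P_j$'s themselves range over an exponentially large set, and they vary with $i$. So ``peel $f^\ast$ down until a genuine degree-$\le d-1$ polynomial survives and invoke the degree-$(d-1)$ case'' does not typecheck: the induction hypothesis counts degree-$(d-1)$ polynomials, not degree-$d$ functions of degree-$(d-1)$ polynomials, and no mechanism in your sketch forces the $f$-dependent pieces to come from a fixed bounded family. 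This is exactly what the paper supplies and what your outline only names: one regularizes $\calH'\cup\{f\}$ into a syntactic refinement $\calH''=\calH'\cup\{h''_1,\ldots,h''_{c''}\}$, passes to disjoint variable blocks using equidistribution of the regular factor and degree preservation (Claim~\ref{clm:tild}), and then applies the Schwartz--Zippel variant (Claim~\ref{clm:sz1}, whose quantitative content is the inequality $\de(c)\de(d-c)\ge\de(d)$ --- precisely what makes the radius $\de(d)$, and not merely some smaller radius, work) to conclude that the composing function cannot depend on the $f$-specific polynomials $h''_j$ at all. Hence $f$ is measurable with respect to the \emph{fixed} factor $\calH'$, giving a list bound $p^{\|\B'\|}$ independent of $n$. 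Without an argument of this kind your reduction leaves the list size uncontrolled, and the proposed induction on $d$ (including the radius transfer $\de(d)-\eps\le\de(d-1)-\eps''$) does not close it.
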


An important contribution of \cite{GKZ08} is an algorithm for list decoding that outputs the list of codewords up to radius $\eta$ efficiently assuming $\ell_{\F}(n,d,\eta)$ is bounded.

It was also shown \cite{GKZ08} that $\LDR_{\F}(n,d) \geq \frac{1}{2}\de_{\F}(d-1)$ and this beats the Johnson radius already when $d$ is large. It is believed \cite{GKZ08, Gopalan10} that the hardest case is the setting of small $d$.
An important step in this direction was taken in \cite{Gopalan10} that considered quadratic polynomials and showed that $\LDR_{\F}(n,2) = \de_{\F}(2)$ for all fields $\F$ and thus proved the conjecture for $d=2$. In the setting of $\F_2$, Kaufman, Lovett and Porat \cite{KLP10} showed tight list sizes for radii beyond the minimum distance.

\subsection{Our Results}
As mentioned before, the algorithmic problem of list decoding was reduced to the combinatorial problem in \cite{GKZ08}. Our main theorem is a resolution of Conjecture~\ref{conj:1} for prime fields. We note that prior to this, the conjecture was open even in the $d<|\F|$ case.

\begin{THM}\label{THM:main}Let $\F=\F_p$ be a prime field. Let $\eps>0$ and $d,n \in \N$. Then, $$\ell_{\F}(d,n,\de_{\F}(d)-\eps) \leq c_{p,d,\eps}.$$
\end{THM}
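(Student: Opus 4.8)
The plan is to argue by contradiction: suppose a ball $B(g, \de_\F(d) - \eps)$ contained a very large number $N$ of degree-$\le d$ polynomials $f_1, \dots, f_N$. The goal is to extract from this a structural contradiction using higher-order Fourier analysis, where the parameters of the analysis (the partition rank / regularity thresholds) depend on $N$ but not on $n$. The first step is a \emph{regularity / decomposition} step: apply the promised extension of Frieze--Kannan weak regularity to general function spaces to simultaneously approximate the received word $g$ and control the collection $\{f_i\}$ in terms of a bounded-complexity factor generated by finitely many polynomials of degree $\le d$ (a ``polynomial factor'' in the language of higher-order Fourier analysis). By taking the regularity parameter small enough as a function of $N, p, d, \eps$, I can assume all the $f_i$ are measurable with respect to — or well-approximated by — a common factor $\mathcal{B}$ generated by polynomials $P_1, \dots, P_C$ of degree $\le d$, with $C = C(N, p, d, \eps)$ and the factor being $\eps'$-regular for a tiny $\eps'$.

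Second, once everything lives inside such a bounded-complexity factor, the distance between two codewords $f_i, f_j$ — and between $f_i$ and $g$ — can be read off from the \emph{joint distribution} of the generating polynomials evaluated at a uniformly random point of $\F^n$. Here the key tool is the extension of the Schwartz--Zippel lemma to compositions of polynomials: it guarantees that for a regular family, the atoms of the factor all have roughly equal measure, so the combinatorial distances converge (up to $o(1)$ in the regularity parameter) to fixed rational quantities determined only by the ``shape'' of the factor, not by $n$. At this point the problem is reduced to a \emph{finite} combinatorial/geometric question: how many functions on a bounded-size space (one value per atom, consistent with being a degree-$\le d$ polynomial composed into the factor) can lie within $\de_\F(d) - \eps$ of a fixed function, where distances are the atom-weighted Hamming distances. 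Since the minimal distance $\de_\F(d)$ is exactly the minimal distance of the code, and we are strictly inside it, I expect a pigeonhole/annular-covering argument (in the spirit of the Johnson-bound packing argument, but now exact because the geometry has collapsed to a finite model) to bound the number of such functions by a constant depending only on the finite model, hence only on $N, p, d, \eps$ — contradicting that $N$ was arbitrarily large once $N$ exceeds that bound.

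The subtle point that makes this loop close is a \emph{bootstrapping of parameters}: the bound on the number of codewords inside the finite model must be strictly smaller than $N$, yet the complexity $C$ of the model was itself allowed to grow with $N$. The resolution is the standard higher-order Fourier analysis trick of iterating the regularity lemma with a growth function: one fixes a function $F$ (how regular we need the factor to be, in terms of its complexity) and obtains a factor of complexity bounded in terms of $N$ alone, regular enough that the finite-model bound — which depends only on the \emph{structure} of degree-$\le d$ polynomials over $\F_p$ and the distance threshold, and is uniform over all factors of a given complexity — applies.

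The main obstacle, I expect, is exactly controlling the interaction between the received word $g$ (an \emph{arbitrary} function, not low-degree) and the low-degree codewords: one must regularize $g$ against the same factor without destroying the low-degree structure of the $f_i$, which is why the general-function-space version of Frieze--Kannan regularity (rather than the polynomial regularity lemma alone) is needed. A secondary difficulty is ensuring the Schwartz--Zippel-type estimate is strong enough to pin down distances to the required $\eps$-accuracy uniformly; for this the composition version is essential, since distances involve $f_i - f_j$ and $f_i - g$ which are themselves (approximately) compositions of the generating polynomials with functions of the atom labels. Once these two analytic inputs are in hand, the remaining combinatorics — a packing bound in a finite metric space strictly below the code's minimum distance — should be routine.
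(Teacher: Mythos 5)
There is a genuine gap, and it sits exactly where you flagged your ``subtle point.'' In your plan the factor $\mathcal{B}$ is built from the list itself, so its complexity $C=C(N,p,d,\eps)$ grows with $N$; but then the number of functions measurable with respect to $\mathcal{B}$ (one value per atom) also grows with $C$, hence with $N$, and no contradiction with ``$N$ large'' can ever emerge. Waving at ``iterating the regularity lemma with a growth function'' does not fix this: the finite-model count you would get is of order $p^{\|\mathcal{B}\|}$, which is not uniform over $N$. The paper avoids the circularity entirely by never letting the factor see the list: the weak (Frieze--Kannan type) regularity is applied to $g$ against the family of \emph{all} degree-$\le d$ polynomials, and the counting argument there gives $c\le O(1/\eps^2)$ polynomials $h_1,\dots,h_c$ \emph{independently of the size of the family} and of $n$. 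This fixed $\calH$ is then polynomially regularized once, and each candidate codeword $f$ is handled separately by regularizing $\calH'\cup\{f\}$; the proof is direct, not by contradiction, and the final bound is simply the number of functions of the fixed factor $\calH'$.

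The second, more fundamental missing idea is the mechanism that forces each close $f$ to be measurable with respect to that fixed factor. Your finite model ends with ``a pigeonhole/annular-covering argument in the spirit of the Johnson-bound packing argument,'' but a Johnson-type packing bound cannot reach radius $\de_{\F}(d)-\eps$: the whole point of the theorem is that the list-decoding radius exceeds the Johnson radius already for $d\ge 2$, and inside the finite model there is nothing (in your write-up) that restricts which atom-functions can arise from genuine degree-$d$ polynomials. The paper's engine is instead a structural degree argument: after passing to disjoint copies of the variables (using near-uniformity of atoms, which is a consequence of high rank --- not of the Schwartz--Zippel extension, which you misattribute here --- together with a degree-preservation lemma), one has $\tilde f$ of degree $\le d$ agreeing with a function $\tilde g$ of only the $\calH'$-coordinates with probability $>1-\de(d)$; the Schwartz--Zippel-type inequality $\de(c)\,\de(d-c)\ge\de(d)$ then shows $\tilde f$ cannot depend on the extra coordinates at all, i.e.\ $f$ is determined by $\calH'$. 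Without this step (or a substitute playing the same role), the reduction to a finite model does not bound the list size, so the proposal as written does not close.
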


\begin{remark}[Algorithmic Implications]As mentioned above, using the reduction of algorithmic list decoding to combinatorial list decoding in \cite{GKZ08} along with Theorem~\ref{THM:main}, for fixed prime fields, $d$ and $\eps>0$, we now have list decoding algorithms in both the global setting (running time polynomial in $|\F|^n$) and the local setting (running time polynomial in $n^d$).
\end{remark}

Next, we study list sizes for radii which are larger than the minimal radius of the code. We give bounds which capture the correct
exponent of $n$ for all radii. This extends the results of Kaufman, Lovett and Porat \cite{KLP10} who studied Reed-Muller codes over $\F_2$, to all prime fields.

\begin{THM}\label{THM:main2}Let $\F=\F_p$ be a prime field. Let $\eps>0$ and $e \leq d,n \in \N$. Then, $$\ell_{\F}(d,n,\de_{\F}(e)-\eps) \leq \exp\l(c_{p,d,\eps}n^{d-e}\r)$$
\end{THM}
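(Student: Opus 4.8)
The plan is to reduce the bound on list sizes for radius $\de_\F(e) - \eps$ to the $e = d$ case (Theorem~\ref{THM:main}), whose quantitative version is the combinatorial heart of the paper, via a \emph{random restriction} argument. Given a received word $g : \F^n \to \F$, suppose $f_1, \dots, f_N \in \calP_d(\F^n)$ all satisfy $\dist(f_i, g) \le \de_\F(e) - \eps$. Partition the $n$ coordinates into two blocks of sizes $n - m$ and $m$ for a suitable constant $m = m(p,d,e,\eps)$, and think of each $f_i$ as a polynomial in the first $n-m$ variables with coefficients that are themselves polynomials of degree $\le d$ in the last $m$ variables; equivalently, restrict the last $m$ variables to a uniformly random point $y \in \F^m$. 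Each restriction $f_i|_y$ has degree $\le d$ in the remaining $n - m$ variables, and $g|_y$ is the corresponding restricted received word.

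The key point is a degree-$e$ versus degree-$d$ trade-off in how the distance behaves under restriction of a single variable (or a block of variables). I would isolate a sub-step showing that if $f$ has degree $\le d$ and we restrict one variable, then on average over the restriction the distance to $g$ only decreases, but more importantly that by choosing the \emph{right} way to fix roughly $d - e$ appropriately structured directions one can reduce the effective radius: fixing $d-e$ variables cuts the ``penalty'' in the Schwartz--Zippel-type accounting by enough to bring $\de_\F(e)$ down to $\de_\F(d)$. Concretely, the degree-$d$ minimal distance is obtained from the degree-$e$ one by writing $e = a(q-1)+b$, $d = a'(q-1)+b'$ and tracking the factor $q^{a'-a}$; each extra ``level'' of degree corresponds to one extra linear factor, which one handles by fixing one more coordinate. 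Doing this for a block of $d - e$ coordinates reduces $\de_\F(e) - \eps$ to $\de_\F(d) - \eps/2$ after paying a constant factor, on a constant fraction of restrictions.

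After the restriction, Theorem~\ref{THM:main} (in its effective form: the number of degree-$d$ polynomials within radius $\de_\F(d) - \eps/2$ of any word is at most $c = c(p, d, \eps)$) says that on each ``good'' restriction $y$, the multiset $\{f_i|_y\}$ takes at most $c$ distinct values. The final step is to bootstrap from the restrictions back to the $f_i$: two polynomials $f_i, f_j$ that agree on more than a $(1 - \de_\F(d))$-fraction of restrictions must be equal (since $f_i - f_j$ has degree $\le d$, and the fraction of $y$ on which $f_i|_y \equiv f_j|_y$ as polynomials is exactly the fraction of $y$ on which all coefficients vanish, which one bounds via Schwartz--Zippel for compositions). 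So the map $f_i \mapsto (f_i|_y)_{y \text{ good}}$ is ``almost injective'' in a counting sense: the number of $f_i$ is at most (number of good $y$) $\times$ (max codewords per $y$) $\times$ (fiber size), and each factor is bounded by $\exp(c' n^{d-e})$ because the number of good restrictions $y$ ranges over $|\F|^m$ with $m = O(n^{d-e} / \text{stuff})$... wait — more carefully, one wants the block of restricted variables to have size $\Theta(n)$? No: the correct accounting is that we restrict to $d - e$ variables at a time but must union over which structured directions to fix, and the $n^{d-e}$ comes from the number of coefficients of a degree-$d$ polynomial in $d - e$ fewer variables modulo the degree-$e$ part — equivalently the ``$d-e$ extra levels'' each contribute a polynomial-in-$n$ many choices, multiplying to $n^{d-e}$ in the exponent.

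The main obstacle I expect is making the restriction step quantitatively faithful: a naive restriction of a \emph{random} coordinate does not cleanly turn $\de_\F(e)$ into $\de_\F(d)$, because the relationship between these distances is multiplicative in a way tied to the structure of low-degree polynomials over $\F_p$ (the $q^a(1 - b/q)$ formula), not to generic random restriction. The fix should be to fix coordinates to \emph{specific} values (or to restrict along specific affine subspaces / to use the derivative/directional structure from higher-order Fourier analysis) so that a degree-$d$ polynomial restricted appropriately genuinely behaves like a degree-$e$ polynomial with the matching distance, and to argue that a positive constant fraction of the possible restrictions are ``good'' in the sense that the distance does not blow up. The tightness of the $n^{d-e}$ exponent (claimed in the abstract, matching \cite{KLP10}) comes from an explicit construction — e.g. products of $d - e$ affine forms times a fixed near-optimal degree-$e$ codeword — which I would present separately to show the bound cannot be improved.
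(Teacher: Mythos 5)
There is a genuine gap, and it sits at the very center of your reduction. You want to fix a block of roughly $d-e$ (or constantly many) variables so that on a constant fraction of restrictions the radius drops from $\de_{\F}(e)-\eps$ to $\de_{\F}(d)-\eps/2$, and then invoke Theorem~\ref{THM:main} on each restricted word. But restriction has no mechanism for decreasing the distance: for a uniformly random assignment to the fixed block, the agreement fraction between $f|_y$ and $g|_y$ equals $\dist(f,g)$ on average, and since $\de_{\F}(e)-\eps > \de_{\F}(d)$ for $e<d$ (and small $\eps$), the restricted words typically still lie at radius \emph{beyond} the minimum distance of the degree-$d$ code in the remaining variables. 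At such radii the list size is not a constant --- it is itself of order $\exp(c\,n^{d-e})$, which is exactly what Theorem~\ref{THM:main2} is quantifying, so the step ``on each good restriction the multiset $\{f_i|_y\}$ takes at most $c$ distinct values'' fails. The paper's own tightness example (take $g\equiv 0$ and the family $P(x)=\bigl(\prod_i (x_i^{p-1}-1)\bigr)\bigl(\prod_j (x_{a+1}-j)\bigr)\bigl(x_{a+2}+Q\bigr)$ over all $Q$ of degree $d-e$) is a concrete obstruction: after fixing any constant number of coordinates, distinct choices of $Q$ still give mostly distinct restrictions at essentially the same distance from $0$, so no choice of ``structured'' values rescues the claim. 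There is also an internal inconsistency in your accounting: if the reduction worked with a constant-size restricted block, then knowing $f$ on all $p^m$ restrictions determines $f$, and your bound would be a constant --- contradicting the $\exp(\Omega(n^{d-e}))$ lower bound you yourself cite. Finally, the fiber-size factor in your product bound is exactly the quantity the theorem asks you to control, and the proposal never supplies an argument for it; the ``number of coefficients modulo the degree-$e$ part'' heuristic for the exponent $n^{d-e}$ is not a derivation.

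For comparison, the paper does not reduce to Theorem~\ref{THM:main} by restriction at all. It reruns the same two-stage regularity argument (weak regularity via Corollary~\ref{cor:pseudorandom}, then strong regularization of $\calH'$ and of $\calH'\cup\{f\}$), writes $f=F(h'_1,\ldots,h'_{c'},h''_1,\ldots,h''_{c''})$, and proves a strengthened Schwartz--Zippel-type statement (Lemma~\ref{lem:main}, generalizing Claim~\ref{clm:sz1}) showing that when $\Pr[f=G_f(h'_1,\ldots,h'_{c'})]\ge 1-\de(e)+\eps/2$, every extra polynomial $h''_j$ must have degree at most $d-e$. The count $\exp(c_{p,d,\eps}n^{d-e})$ then comes from the constantly many choices of the composing function $F$ times the number of tuples of degree-$(d-e)$ polynomials. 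If you want to salvage your outline, the missing idea you would have to supply is precisely this structural statement: that the ``excess'' information distinguishing codewords at radius $\de(e)-\eps$ is carried by polynomials of degree at most $d-e$; no restriction-plus-Theorem~\ref{THM:main} bookkeeping substitutes for it.
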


\begin{remark}
The exponent of $n$ in Theorem \ref{THM:main2} is tight, as the following example shows. Let $e=a(p-1)+b$ with $0 \le b < p-1$. Consider polynomials of the form
$$
P(x)=\l(\prod_{i=i}^a(x_i^{p-1}-1)\r)\l(\prod_{j=1}^b(x_{a+1}-j)\r)\left(x_{a+2}+Q(x_{a+3},\ldots,x_n)\right)
$$
for all polynomials $Q$ of degree $d-e$. Observe that $\Pr[P(x) \neq 0]=  \frac{1}{p^a}\l(1-\frac{b}{p}\r) \l(1-\frac{1}{p}\r) = \de(e)(1-1/p)$. The number of such polynomials is $\exp(c' n^{d-e})$ for
some $c'=c'_{p,d,e}$.
\end{remark}

\subsection{Proof overview}

Previous results have mostly relied on the idea of local correction of the RM code. The work of \cite{Gopalan10} uses (linear) Fourier analysis which does not seem to go beyond quadratic polynomials. We use tools from higher order Fourier analysis to resolve the conjecture. We think of $\F=\F_p,d,\eps$ as constants. For a received word $g:\F^n \to \F$ our goal is to upper bound $\l|\{f \in \calP_d(\F^n):\dist(f,g) \leq \eta\}\r|$. For simplicity of exposition, we assume in the proof overview that $d<|\F|$. The general case is somewhat more technical, as it requires the introduction of nonclassical polynomials.

\paragraph{A weak regularity (A low complexity proxy for the received word).}
The first step is an extension of the Frieze-Kannan weak regularity \cite{FK99} which would allow us to move from an arbitrary received word $g$ to a "low complexity" received word. We note that a somewhat similar idea appeared also in \cite{trevisan2009regularity}.

Let $X,Y$ be finite sets and let $P(Y):=\{f:Y \rightarrow \R_{\geq 0}  :\sum_{y \in Y}f(y)=1\}$ be the probability simplex over $Y$. We view functions $f:X \to P(Y)$ as randomized functions from $X$ to $Y$. For $f,g:X \rightarrow P(Y)$ we define $$\Pr_{x}[f(x)=g(x)]:=\E_x\langle f(x),g(x)\rangle.$$ Given $\eps>0$, any function $g:X \rightarrow P(Y)$ and a collection $F$ of functions $f:X \rightarrow P(Y)$, one can find a collection of $c:=1/\eps^2$ functions $h_1, \ldots , h_c \in F$ and a \emph{proxy} $g_1:X \rightarrow P(Y)$ for $g$, such that $g_1$ is determined by $h_1(x),\ldots , h_c(x)$ and such that $g_1$ is indistinguishable from $g$ with respect to $F$.

\restate{Lemma \ref{lem:pseudorandom}}
{Let $g:X \rightarrow P(Y)$, $\eps>0$, and $F$ be a collection of functions $f:X \rightarrow P(Y)$. Then there exist $c \leq 1/\eps^2$ functions $h_1, h_2,\ldots , h_c \in F$ and a function $\Gamma:P(Y)^c \rightarrow P(Y)$ such that for all $f \in F$, $$\l|\Pr[g(x)=f(x)]-\Pr[\Gamma(h_1(x),h_2(x),\ldots , h_c(x))=f(x)]\r| \leq \eps.$$}

In our case, $X=\F^n$, $Y=\F$ and $F=\calP_d(\F^n)$.
When $F$ is a family of "deterministic" functions $f:X \to Y$, as it is in our case, we can obtain one-sided approximation using only deterministic functions $h_1,\ldots,h_c$.

\restate{Corollary \ref{cor:pseudorandom}}
{Let $g:X \rightarrow Y$, $\eps>0$, and $F$ be a collection of functions $f:X \rightarrow Y$. Then there exist $c \leq 1/\eps^2$ functions $h_1, h_2,\ldots , h_c \in F$ such that for every $f \in F$, there is a function $\Gamma_f:Y^c \rightarrow Y$ such that $$\Pr_x[\Gamma_f(h_1(x), \ldots , h_c(x))=f(x)] \geq \Pr_x[g(x)=f(x)] - \eps.$$}

\paragraph{Strong regularity applied to $\calH$.}  The collection of polynomials $\calH=\{h_1,\ldots , h_c\} \subset \calP_d(\F^n)$ defines a partition of the input space $\F^n$ into \emph{atoms} $\{x \in \F^n: h_1(x)=a_1,\ldots,h_c(x)=a_c\}$. We next regularize $\calH$. The objective of regularization is to further refine the partition into smaller atoms with the goal that the polynomials $h_1,\ldots,h_c$ are "pseudo-random". Formally, we require the polynomials to be inapproximable by lower degree polynomials, which is equivalent to having negligible Gowers uniformity norm. This ensures, for example, that for uniformly random $X$ in $\F^n$, the distribution $(h_1(X),\ldots , h_c(X))$ is close to uniform over the atoms. This process of regularization was introduced by \cite{GT09} and is now standard in higher-order Fourier analysis.
Let $\calH'=\{h'_1,\ldots,h'_{c'}\} \subset \calP_d(\F^n)$ be the regularized $\calH$ that satisfies the above properties, where $c'=c'(p,d,c)$.

\paragraph{Structure of polynomials close to low complexity received words.}
Fix now an $f \in \calP_d(\F^n)$ such that $\dist(f,g) \leq \de_p(d)-\eps$. We will show that $f$ must be determined by $\cal H'$. That is,
$$
f(x) = F(h'_1(x),\ldots,h'_{c'}(x))
$$
for some $F:\F^{c'} \to \F$. This will bound the number of such functions by $p^{p^{c'}}$, which is independent of $n$.

In order to achieve that, we regularize the family of polynomials $\calH' \cup \{f\}$. By choosing regularity parameters appropriately, we can assure that only $f$ decomposes further,
$$
f = F(h'_1(x),\ldots,h'_{c'}(x),h''_1(x),\ldots,h''_{c''}(x))
$$
where $\calH'' = \{h_1,\ldots,h'_{c'},h''_1,\ldots,h''_{c''}\}$ is regular. Moreover, for $G_f(h'_1(x),\ldots,h'_{c'}(x))=\Gamma_f(h_1(x),\ldots,h_c(x))$, we know that
$$
\Pr[f(x) = G_f(h'_1(x),\ldots,h'_{c'}(x))] \ge 1-\de_p(d)+\eps/2.
$$
The regularity of $\calH''$ allows us to reduce the question to that of the structure of $F$ vs $G_f$. We then show, by a variant of the Schwartz-Zippel lemma, that such an approximation can only exist when $F$ does not depend on $h''_1,\ldots,h''_{c''}$. The bound for larger radii $\de_{\F}(e)-\eps$ with $e<d$ follows along similar lines. We show that in the decomposition above, since $\Pr[F=G_f] > 1-\de_{\F}(e)+\eps/2$, this can only occur when $h''_1,\ldots,h''_{c''}$ have degree at most $d-e$. As the number of such polynomials is exponential in $n^{d-e}$, we derive similar bounds for the number of functions $f$.

\section{Preliminaries}
\subsection{Notation}
Let $\N$ denote the set of positive integers. For $n \in \N$, let $[n]:=\{1,2,\ldots , n\}$. We use $y=x \pm \eps$ to denote $y \in [x-\eps, x+\eps]$. Let $\T$ denote the torus $\R/\Z$. This is an abelian group under addition. For $n \in \N$, and $x,y \in \C^n$, let $\langle x,y \rangle:=\sum_{i=1}^n x_i\overline{y_i}$ where $\overline{a}$ is the conjugate of $a$. Let $||x||_2:=\sqrt{\langle x,x \rangle}$.

Fix a prime field $\F=\F_p$. Let $|.|$ denote the natural map from $\F$ to $\{0,1,\ldots , p-1\}\in \Z$. Let $e:\T \rightarrow \C$ be the map $e(x):=e^{2\pi i x}$. Let $e_p:\F \rightarrow \C$ be the map $e_p(x)=e(\frac{|x|}{p})$. For an integer $k \geq 0$, let $\U_{k}:=\frac{1}{p^k}\Z/\Z$. Note that $\U_k$ is a subgroup of $\T$. Let $\iota:\F \rightarrow \U_1$ be the bijection $\iota(a)=\frac{|a|}{p} \pmod 1$.

For a finite set $X$ and $n \in \N$, with $f:X\rightarrow \C^n$, we write $\E_{x}f(x)$ to denote $\frac{1}{|X|}\sum_{x \in X}f(x)$. We define $||f||_2:=\sqrt{\E_x ||f(x)||_2^2}$. If $g: X \rightarrow \C^n$, we have $\langle f,g \rangle  :=  \E_x \langle f(x),g(x) \rangle$. Let $Y$ be a finite set. Let $P(Y):=\{f:Y \rightarrow \R_{\geq 0}  :\sum_{y \in Y}f(y)=1\}$ denote the probability simplex on $Y$. We shall write randomized functions by mapping them to the simplex. Thus, for $f,g:X \rightarrow P(Y)$ we define $$\Pr_{x}[f(x)=g(x)]:=\E_x\langle f(x),g(x)\rangle.$$
If $f:X \rightarrow Y$ is a deterministic function, then we embed $Y$ into $P(Y)$ in the obvious way, and consider $f:X \to P(Y)$ with $f(x)_{y}=1$ if $f(x)=y$ when viewed as a function to $Y$, and $f(x)_{y'}=0$ for all $y' \in Y \setminus \{y\}$.

\subsection{Polynomials}

\begin{define}[Derivative] Given a function $f:\F^n \rightarrow \T$ and $a \in \F^n$, define the derivative of $f$ in direction $a$ as $D_af:\F^n \rightarrow \T$ as $D_af(x)=f(x+a)-f(x)$ for $x \in \F^n$.
\end{define}

\begin{define}[Nonclassical Polynomial or Polynomial]\label{def:poly}Let $d \in \N$. Then $f:\F^n \rightarrow \T$ is a polynomial of degree $\leq d$ if for all $a_1,\ldots , a_{d+1},x \in \F^n$, \begin{equation}\label{eq:der}\l(D_{a_1}\ldots D_{a_{d+1}}f\r)\l(x\r)=0.\end{equation}
\end{define}

The degree of $f$ denoted by $\deg(f)$ is the smallest such $d \in \N$ for which the above holds. If the image of $f$ lies in $\U_1$ then $f$ is called a classical polynomial of degree $d$. When $d<|\F|$, it is known that all the polynomials of degree $d$ satisfying \eqref{eq:der} are classical polynomials. However, when $d \geq |\F|$, there exist nonclassical polynomials. We write $\Poly_{\leq d}(\F^n \rightarrow \T)$ to denote the class of degree $ \leq d$ polynomials. Unless explicitly specified, a polynomial is a (potentially) nonclassical polynomial. The following lemma from \cite{TZ} characterizes polynomials.

\begin{lem}[\cite{TZ}, Lemma 1.7]\label{lem:poly}  Let $d \in \N$.
\begin{itemize}
\item A function $f:\F^n \rightarrow \T$ is a polynomial of degree $\leq d$ if and only if $D_af$ is a polynomial of degree $\leq d-1$ for all $a\in \F^n$.
\item A function $f:\F^n \rightarrow \T$ is a classical polynomial with $deg(f) \leq d$ if $f=\iota \circ P$ where $P:\F^n \rightarrow \F$ is of the form $$P(x_1,\ldots , x_n)=\sum_{0 \leq d_1,\ldots , d_n \leq p-1:\\ \sum_{i}d_i \leq d}c_{d_1,\ldots , d_n}\prod_{i=1}^nx_i^{d_i},$$ where $c_{d_1,\ldots , d_n} \in \F$ are unique.
\item A function $f:\F^n \rightarrow \T$ is a polynomial with $deg(f) \leq d$ if $f$ is of the form $$f(x_1,\ldots , x_n)=\alpha+\sum_{0 \leq d_1,\ldots , d_n \leq p-1, k \geq 0:\\ \sum_{i}d_i \leq d-k(p-1)}\frac{c_{d_1,\ldots , d_n,k}\prod_{i=1}^n|x_i|^{d_i}}{p^{k+1}} \pmod 1,$$ where $c_{d_1,\ldots , d_n,k} \in \{0,\ldots , p-1\}$ and $\alpha \in \T$ are unique. $\alpha$ is called the shift of $f$ and the largest $k$ such that some $c_{d_1,\ldots , d_n,k}\neq 0$ is the depth of $f$, denoted by $\depth(f)$. Note that classical polynomials have $0$ shift and $0$ depth.
\item If $f:\F^n \rightarrow \T$ is a polynomial with $\depth(f)=k$, then its image lies in a coset of $\U_{k+1}$.
\item If $f:\F^n \rightarrow \T$ is a polynomial such that $\deg(f)=d$ and $\depth(f)=k$, then $\deg(pf)=\max (d-p+1,0)$ and $\depth(pf)=k-1$. Also, if $c \in \{1,\ldots,p-1\}$ then the degree and depth of $cf$ remain unchanged.
\end{itemize}
\end{lem}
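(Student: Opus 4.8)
\medskip
\noindent\textbf{Proof plan.} This classification is quoted from \cite{TZ}; here is how I would organize the argument. Everything reduces to the first bullet, which is immediate from the definition: since $D_{a_1}\cdots D_{a_{d+1}}f = D_{a_1}\cdots D_{a_d}\bigl(D_{a_{d+1}}f\bigr)$, the vanishing of every $(d+1)$-fold derivative of $f$ is the same as the vanishing of every $d$-fold derivative of each $D_a f$, i.e.\ $\deg(D_a f)\le d-1$ for all $a\in\F^n$. This sets up an induction on $d$ for all the remaining parts.

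For the classical characterization (second bullet) I would recall that every $f\colon\F^n\to\F$ is a polynomial and reduce exponents below $p$ using $x_i^p=x_i$; the point to check is that for $P$ in this reduced form the derivative degree of $\iota\circ P$ equals the total degree of $P$. The $\le$ direction is the computation that $D_a$ lowers the total degree of each monomial by at least one; the $\ge$ direction follows by taking a top monomial $\prod_i x_i^{d_i}$ and differentiating $d_i$ times in each coordinate direction to extract its nonzero coefficient, so a polynomial of derivative degree $\le d$ has no monomial of total degree $>d$. Uniqueness of the $c_{d_1,\dots,d_n}$ is uniqueness of the reduced representation.

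The substantive part is the third bullet, proved by induction on $d$ using the first bullet. For $d=0$, $D_a f\equiv 0$ for all $a$ forces $f\equiv\alpha$ for a constant $\alpha\in\T$, matching the claimed form. For the inductive step, given $\deg f\le d$, each $D_a f$ has degree $\le d-1$ and hence, by induction, the explicit form; one must then antidifferentiate the family $\{D_a f\}_{a\in\F^n}$ to recover $f$, uniquely up to the additive shift $\alpha$. I expect this integration step to be the main obstacle, for two reasons. First, the $\{D_a f\}$ are not arbitrary but satisfy the cocycle identity $D_{a+b}f(x)=D_a f(x)+(D_b f)(x+a)$, and one must use exactly this to glue the directional antiderivatives together consistently. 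Second, there are carries: the one-variable map $|x_i|\mapsto|x_i+1|$ is not $|x_i|+1$ once $|x_i|=p-1$, so the naive antiderivative of a depth-$k$ monomial $c\prod_i|x_i|^{d_i}/p^{k+1}$ is not of the same shape, and the wrap-around correction is precisely what produces terms one power of $p$ deeper and so raises the depth — which is why a degree-$d$ polynomial is forced to be nonclassical exactly when $d\ge p$. Concretely I would pass to a binomial-coefficient basis in each coordinate, where discrete differentiation acts cleanly up to a correction supported on $\{|x_i|=p-1\}$, solve the resulting one-variable integration problem there, reassemble while tracking the depth, and check that the monomials $\prod_i|x_i|^{d_i}/p^{k+1}$ with $\sum_i d_i\le d-k(p-1)$ together with the shift are exactly what can appear, with uniqueness of the coefficients.

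Given the explicit form, the last two bullets are immediate. If $\depth(f)=k$ then every monomial of $f$ has denominator dividing $p^{k+1}$, so the image of $f$ lies in the coset $\alpha+\U_{k+1}$. Multiplying the representation by $p$ sends each $c\prod_i|x_i|^{d_i}/p^{k+1}$ to $c\prod_i|x_i|^{d_i}/p^{k}\pmod 1$: the depth-$0$ (classical) part becomes $0$ modulo $1$, the depth drops to $k-1$, and the degree of the surviving top term drops from $\sum_i d_i+k(p-1)$ to $\sum_i d_i+(k-1)(p-1)$, i.e.\ by $p-1$; this is $\deg(pf)=\max(\deg f-p+1,0)$, the value $0$ arising exactly when $f$ is classical. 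Multiplying by a unit $c\in\{1,\dots,p-1\}$ changes neither the set of monomials present nor their denominators, leaving degree and depth unchanged.
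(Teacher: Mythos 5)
The paper itself does not prove this lemma: it is imported verbatim from Tao--Ziegler (\cite{TZ}, Lemma 1.7), so there is no in-paper argument to compare against, and your outline is broadly consistent with how the classification is actually proved there (induction on degree via the derivative characterization in the first bullet, explicit handling of the carry phenomenon). But as submitted it has a genuine gap: the third bullet is the entire content of the lemma, and your plan stops exactly at the decisive step. You reduce to ``antidifferentiating'' the family $\{D_af\}$ along the cocycle identity and handling the wrap-around at $|x_i|=p-1$, and you explicitly label this the main obstacle without carrying it out. In particular, the two computational facts on which everything else rests are never established: (i) that each monomial $\prod_i|x_i|^{d_i}/p^{k+1}$ is a polynomial of degree \emph{exactly} $\sum_i d_i+k(p-1)$ (the upper bound needs the identity governing how $|x+a|$ differs from $|x|+|a|$ by a multiple of $p$, and the lower bound needs a nonvanishing iterated-derivative computation analogous to, but subtler than, the classical case); and (ii) that these monomials together with the shift $\alpha$ are linearly independent modulo $1$, which is what the uniqueness claim means. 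Without (i) and (ii) the induction cannot close, the constraint $\sum_i d_i\le d-k(p-1)$ is not forced, and the final two bullets have nothing to stand on.

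A secondary issue: your claim that the fifth bullet is ``immediate'' is too quick. Your argument uses the unproved fact that the degree of a polynomial in canonical form equals the maximum of $\sum_i d_i+k(p-1)$ over its nonzero terms, and it additionally assumes tacitly that this maximum is attained by a term of maximal depth $k$. Without that assumption the reasoning only yields $\deg(pf)\le\max(\deg f-p+1,0)$: if the top-degree terms are all classical while the depth is carried by lower-degree terms (e.g.\ $\iota(x_1^2x_2^2)+|x_3|/p^2$ over $\F_3$, of degree $4$ and depth $1$), multiplication by $p$ kills the top terms and the degree drops by more than $p-1$. For the way the lemma is used in this paper (replacing $h''_i$ by $p\cdot h''_i$ to \emph{reduce} degree) the inequality suffices, but if you assert equality as in the statement you must either prove the degree formula and make the attainment hypothesis explicit, or weaken the claim accordingly.
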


Throughout the article, we assume without loss of generality that nonclassical polynomials have zero shift.

\subsection{Rank and Polynomial Factors}
\begin{define}[Rank] Let $d \in \N$ and $f:\F^n \rightarrow \T$. Then $\rank_d(f)$ is defined as the smallest integer $r$ such that there exist polynomials $h_1,\ldots , h_r:\F^n \rightarrow \T$ of degree $\leq d-1$ and a function $\Gamma:\T^r \rightarrow \T$ such that $f(x)=\Gamma(h_1(x),\ldots , h_r(x))$. If $d=1$, then the rank is $0$ if $f$ is a constant function and is
$\infty$ otherwise. If $f$ is a polynomial, then $\rank(f)=\rank_d(f)$ where $d=\deg(f)$.
\end{define}

\begin{define}[Factor] Let $X$ be a finite set. Then a factor $\B$ is a partition of the set $X$. The subsets in the partition are called atoms.
\end{define}

For sets $X$ and $Y$, and a factor $\B$ of $X$, a function $f:X\rightarrow P(Y)$ is said to be measurable with respect to $\B$ if it is constant on the atoms of $\B$.
The average of $f$ over $\B$ is $\E[f|\B]:X \to P(Y)$ defined as
$$
\E[f|\B](x)=\E_{y \in \B(x)}[f(y)]
$$
where $\B(x)$ is the atom containing $x$. Clearly, $\E[f|\B]$ is measurable with respect to $\B$.

A collection of functions $h_1,\ldots , h_c:X \rightarrow Y$ defines a factor $\B$ whose atoms are $\{x \in X:h_1(x)=y_1,\ldots,h_c(x)=y_c\}$ for every $(y_1,\ldots , y_c) \in Y^c$. We use $\B$ to also denote the map $x \mapsto \l(h_1(x),\ldots , h_c(x)\r)$. A function $f$ is measurable with respect to a collection of functions if it is measurable with respect to the factor the collection defines.

\begin{define}[Polynomial Factor] A polynomial factor $\B$ is a factor defined by a collection of polynomials $\calH=\{h_1,\ldots , h_c:\F^n \rightarrow \T\}$ and the factor is written as $\B_{\calH}$. The degree of the factor is the maximum degree of $h \in \calH$. \end{define}

Let $|\B|$ be the number of polynomials defining the factor. If $\depth(h_i)=k_i$ above, then we define $||\B||:=\prod_{i=1}^c p^{k_i+1}$ to be the number of (possibly empty) atoms.

\begin{define}[Rank and Regularity of Polynomial Factor]Let $\B$ be a polynomial factor defined by $h_1,\ldots , h_c:\F^n \rightarrow \T$ such that $\depth(h_i)=k_i$ for $i \in [c]$. Then, the rank of $\B$ is the least integer $r$ such that there exists $(a_1,\ldots , a_c) \in \Z^c$, $\l(a_1 \mod p^{k_1+1}, \ldots , a_c \mod p^{k_c+1}\r) \neq (0,\ldots , 0)$ for which the linear combination $h(x):=\sum_{i=1}^c  a_i h_i(x)$ has $\rank_d(h) \leq r$ where $d=\max_{i}\deg(a_i h_i)$. For a non decreasing function $r:\N \rightarrow \N$, a factor $\B$ is $r$-regular if its rank is at least $r(|\B|)$.
\end{define}

\begin{define}[Semantic and Syntactic refinement]Let $\B$ and $\B'$ be polynomial factors on $\F^n$. A factor $\B'$ is a syntactic refinement of $\B$, denoted by $\B' \succeq_{syn}\B$ if the set of polynomials defining $\B$ is a subset of the set of polynomials defining $\B'$. It is a semantic refinement, denoted by $\B' \succeq_{sem}\B$ if for every $x,y \in \F^n$, $\B'(x)=\B'(y)$ implies $\B(x)=\B(y)$.
\end{define}

We will use the following regularity lemma proved in \cite{BFHHL}.
\begin{lem}[Polynomial Regularity Lemma \cite{BFHHL}]\label{lem:reg}  Let $r:\N \rightarrow \N$ be a non-decreasing function and $d \in \N$. Then there is a function $C_{r,d}^{(\ref{lem:reg})}:\N \rightarrow \N$ such that the following is true. Let $\B$ be a factor defined by polynomials $P_1,\dots , P_c:\F^n \rightarrow \T$ of degree at most $d$. Then, there is an $r$-regular factor $\B'$ defined by polynomials $Q_1,\ldots , Q_{c'}:\F^n \rightarrow \T$ of degree at most $d$ such that $\B' \succeq_{sem} \B$ and $c' \leq C_{r,d}^{(\ref{lem:reg})}(c)$.

Moreover if $\B \succeq_{sem} \hat{\B}$ for some polynomial factor $\hat{\B}$ that has rank at least $r(c')+c'+1$, then $\B' \succeq_{syn} \hat{B}$.
\end{lem}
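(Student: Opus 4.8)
The plan is to prove the lemma by iterated refinement, maintaining a polynomial factor together with a well-founded progress measure until $r$-regularity is reached. Start with $\B_0 := \B$ and build factors $\B_0, \B_1, \ldots$, each defined by polynomials of degree $\le d$ and each a semantic refinement of $\B$. To each factor attach its \emph{degree profile} $\vec M(\B) = (M_d, M_{d-1}, \ldots, M_1)$, where $M_j$ is the number of defining polynomials of degree exactly $j$, and compare profiles lexicographically from the top coordinate downward. If $\B_i$ is $r$-regular we stop and output $\B' := \B_i$; otherwise we perform a refinement step that strictly decreases $\vec M$.

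\textbf{The refinement step.} Suppose $\B_i = \B_{\calH_i}$ with $\calH_i = \{Q_1,\ldots,Q_m\}$ is not $r$-regular. By definition there is a tuple $(a_1,\ldots,a_m) \in \Z^m$, nontrivial modulo the depths of the $Q_j$, such that $h := \sum_j a_j Q_j$ satisfies $\rank_{d^\ast}(h) \le r(m) - 1$ with $d^\ast := \max_j \deg(a_j Q_j)$; equivalently $h = \Gamma(g_1,\ldots,g_s)$ for some $s \le r(m)-1$ polynomials $g_\ell$ of degree $\le d^\ast - 1$ and some $\Gamma$. Pick an index $j_0$ with $\deg(a_{j_0} Q_{j_0}) = d^\ast$. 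Using Lemma~\ref{lem:poly} — in particular the behaviour of $\deg$ and $\depth$ under scaling by nonzero field elements and under multiplication by $p$ — one recovers $Q_{j_0}$ from $\{Q_j : j \neq j_0\}$, the $g_\ell$'s, and finitely many low-degree ``$p$-multiples'' $pQ_{j_0}, p^2 Q_{j_0}, \ldots$ Hence the factor $\B_{i+1}$ defined by $\{Q_j : j \neq j_0\} \cup \{g_1,\ldots,g_s\} \cup \{pQ_{j_0}, p^2 Q_{j_0},\ldots\}$ is a semantic refinement of $\B_i$, still has degree $\le d$, and has strictly fewer polynomials of degree $d^\ast$ while adding only polynomials of degree $< d^\ast$; so $\vec M(\B_{i+1}) <_{\mathrm{lex}} \vec M(\B_i)$.

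\textbf{Termination, the bound, and the syntactic clause.} Since each step lowers $\vec M$ lexicographically and the per-step growth of the lower coordinates is bounded by $r(\cdot)$ evaluated at the current (bounded) factor size, a nested induction on $d$ shows the process halts and produces $\B'$ with $c' \le C^{(\ref{lem:reg})}_{r,d}(c)$ polynomials for an explicit (tower/Ackermann-type) function $C^{(\ref{lem:reg})}_{r,d}$; by construction $\B' \succeq_{sem} \B$ and by the stopping rule $\B'$ is $r$-regular. For the ``moreover'' statement we run exactly the same process with the extra rule that a defining polynomial of $\hat\B$ is never chosen as the eliminated polynomial $Q_{j_0}$; then $\B_i \succeq_{syn} \hat\B$ is an invariant (semantic refinement of $\hat\B$ is automatic since $\B \succeq_{sem} \hat\B$). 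This rule can always be obeyed: if at some stage every degree-$d^\ast$ term of $h$ were supported on defining polynomials of $\hat\B$, then $h$ minus the strictly-lower-degree contribution of the non-$\hat\B$ polynomials would exhibit a nontrivial combination of $\hat\B$'s polynomials as a function of fewer than $r(c') + c'$ polynomials of degree $< d^\ast$, forcing $\rank(\hat\B) < r(c') + c' + 1$, a contradiction; so some degree-$d^\ast$ term comes from a non-$\hat\B$ polynomial, which we eliminate. At termination $\B' \succeq_{syn} \hat\B$.

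\textbf{Main obstacle.} I expect the technical heart to be twofold. First, pinning down the recursion defining $C^{(\ref{lem:reg})}_{r,d}(c)$ and verifying that the lexicographic descent on $\vec M$ really terminates within that bound requires a careful nested induction on the degree. Second — and more delicate in the nonclassical setting — is the depth/degree bookkeeping: handling depths, the ``$a_i \bmod p^{k_i+1} \neq 0$'' nontriviality condition, the degree and depth of $pf$ (Lemma~\ref{lem:poly}) in the elimination step, and comparing ranks taken at different degree cut-offs in the syntactic-refinement argument. It is precisely here that the slack ``$+\,c'+1$'' in the rank hypothesis and the monotonicity of $r$ are used.
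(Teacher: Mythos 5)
You cannot be compared against a proof in the paper here: Lemma~\ref{lem:reg} is imported from \cite{BFHHL} and the paper gives no argument for it, so the relevant comparison is with the known proof in that reference, which does follow the iterated-refinement scheme you describe but with joint degree--depth bookkeeping rather than the degree profile alone. Your skeleton (refine while irregular, lexicographic descent, Ackermann-type bound, protecting $\hat\B$ via its high rank for the syntactic clause) is the standard one and is essentially complete for classical polynomials.

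The genuine gap is in your elimination step for nonclassical polynomials. The nontriviality condition in the definition of factor rank only demands $a_j \not\equiv 0 \pmod{p^{k_j+1}}$, so it can happen that \emph{every} index attaining the maximal degree $d^\ast=\max_j\deg(a_jQ_j)$ has $p\mid a_{j}$ (e.g.\ $a_{j_0}=p$ with $\depth(Q_{j_0})\ge 1$, so $d^\ast=\deg(Q_{j_0})-(p-1)$). Writing $a_{j_0}=p^t u$ with $u$ coprime to $p$ and $t\ge 1$, the identity $h=\sum_j a_jQ_j$ determines only $p^tQ_{j_0}(x)$ pointwise from the values of the other $Q_j$ and the $g_\ell$. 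Your device of adjoining the $p$-multiples $pQ_{j_0},p^2Q_{j_0},\ldots$ does not recover $Q_{j_0}$: all of these values are themselves functions of $pQ_{j_0}(x)$, and $pQ_{j_0}(x)$ pins down $Q_{j_0}(x)$ only up to $p$ possible values in its coset of $\U_{k_{j_0}+1}$ --- the top depth digit is lost, and you have exhibited no degree-$\le d$ polynomial supplying it. Hence your $\B_{i+1}$ need not satisfy $\B_{i+1}\succeq_{sem}\B_i$, and the invariant $\B'\succeq_{sem}\B$, which is the whole content of the lemma, breaks. Nor can you dodge this by instead eliminating some $Q_{j_0}$ with $p\nmid a_{j_0}$ but $\deg(a_{j_0}Q_{j_0})<d^\ast$: then you delete a polynomial of degree below $d^\ast$ while adding witnesses $g_\ell$ of degree up to $d^\ast-1$, and the lexicographic descent on $\vec M$ fails. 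This is exactly the extra difficulty of nonclassical factors that the proof in \cite{BFHHL} is built to handle (via a replacement step and potential function tracking degree and depth together), so as written your argument proves the lemma only in the classical case. A smaller point: in the ``moreover'' clause, ``never eliminate a defining polynomial of $\hat\B$'' presupposes those polynomials are among the defining polynomials of the current factor, which only follows if you first adjoin them syntactically (harmless, but it must be said, and the rank comparison you then make mixes rank cutoffs at different degrees and is where the $r(c')+c'+1$ slack has to be spent explicitly).
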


The next lemma shows that a regular factor has atoms of roughly equal size.
\begin{lem}[Size of atoms \cite{BFHHL}]\label{lem:atomsize}Given $\eps>0$, let $\B$ be a polynomial factor of rank at least $r_{d}^{(\ref{lem:atomsize})}(\eps)$ defined by polynomials $P_1,\ldots , P_c:\F^n \rightarrow \T$ of degree at most $d$ such that $\depth(P_i)=k_i$ for $i \in [c]$. For every $b \in \otimes_{i=1}^c  \U_{k_i+1}$, $$ \Pr_x[\B(x)=b]=\frac{1}{||\B||} \pm \eps.$$
\end{lem}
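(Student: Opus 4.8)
The plan is to expand the indicator of an atom into exponential sums over $\F^n$ and then kill every term except the main one using the standard fact that high-rank polynomials are unbiased. First I would fix $b=(b_1,\ldots,b_c)$ with $b_i\in\U_{k_i+1}$ and write $\mathbf{1}[\B(x)=b]=\prod_{i=1}^c\mathbf{1}[P_i(x)=b_i]$. Since $P_i(x)-b_i$ lies in the cyclic group $\U_{k_i+1}\cong\Z/p^{k_i+1}\Z$, the delta function on that group expands as
$$
\mathbf{1}[P_i(x)=b_i]=\frac{1}{p^{k_i+1}}\sum_{j_i=0}^{p^{k_i+1}-1}e\l(j_i\l(P_i(x)-b_i\r)\r),
$$
and multiplying the $c$ expansions together yields
$$
\mathbf{1}[\B(x)=b]=\frac{1}{\|\B\|}\sum_{j_1,\ldots,j_c}e\l(-\sum_{i=1}^c j_ib_i\r)\,e\l(\sum_{i=1}^c j_iP_i(x)\r),
$$
where each $j_i$ runs over $\{0,\ldots,p^{k_i+1}-1\}$, so the outer sum has exactly $\|\B\|$ terms.

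Next I would take $\E_x$ of both sides. The all-zeros tuple contributes exactly $1/\|\B\|$. For every other tuple $(j_1,\ldots,j_c)\neq 0$ set $h:=\sum_{i=1}^c j_iP_i$; by Lemma~\ref{lem:poly}, integer scalings only lower the degree, so $h$ is a polynomial of degree at most $d$. Because $0\le j_i<p^{k_i+1}$, the residues $\l(j_1\bmod p^{k_1+1},\ldots,j_c\bmod p^{k_c+1}\r)$ are not all zero, so $h$ is exactly one of the nontrivial linear combinations appearing in the definition of the rank of $\B$. Hence, since $\B$ has rank at least $r_{d}^{(\ref{lem:atomsize})}(\eps)$ by hypothesis, we get $\rank_{d'}(h)\ge r_{d}^{(\ref{lem:atomsize})}(\eps)$, where $d'=\max_i\deg(j_iP_i)$. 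As any polynomial of degree strictly below $d'$ is a function of a single polynomial of lower degree (namely itself), it has $\rank_{d'}$ at most $1$, so once $r_{d}^{(\ref{lem:atomsize})}(\eps)\ge 2$ we must in fact have $d'=\deg(h)$; that is, $\rank_{\deg h}(h)\ge r_{d}^{(\ref{lem:atomsize})}(\eps)$.

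The last step is to invoke the ``bias implies bounded rank'' theorem for polynomials over $\F_p$: there is a function $r_0(d,\eps)$ such that every (possibly nonclassical) polynomial $h$ of degree at most $d$ with $\rank_{\deg h}(h)\ge r_0(d,\eps)$ satisfies $|\E_x\,e(h(x))|\le\eps$ — the contrapositive of the standard distributional theorem for polynomials over finite fields, see \cite{TZ} for the general (nonclassical) case. Setting $r_{d}^{(\ref{lem:atomsize})}(\eps):=\max\l(r_0(d,\eps),2\r)$, each of the $\|\B\|-1$ nonzero tuples then contributes a term of modulus at most $\eps$, and therefore
$$
\Pr_x[\B(x)=b]=\frac{1}{\|\B\|}\pm\frac{\|\B\|-1}{\|\B\|}\,\eps=\frac{1}{\|\B\|}\pm\eps,
$$
as claimed.

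I expect the bias-versus-rank estimate to be the only genuinely nontrivial ingredient; everything else is Fourier bookkeeping. The delicate points all live in the nonclassical regime $d\ge p$: tracking the depths $k_i$, using the characters of $\U_{k_i+1}$ (rather than of $\F$) when expanding the indicators, and verifying via Lemma~\ref{lem:poly} that the integer combinations $\sum_i j_iP_i$ still have degree at most $d$ so that the regularity hypothesis on $\B$ genuinely applies to them.
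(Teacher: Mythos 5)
Your proof is correct: the paper itself gives no proof of this lemma (it is imported from \cite{BFHHL}), and your argument — expanding the atom indicator in characters of $\otimes_i \U_{k_i+1}$, observing that every nontrivial integer combination $\sum_i j_i P_i$ is exactly a combination covered by the factor-rank hypothesis (with the $\rank_{d'}$ versus $\rank_{\deg h}$ issue handled correctly via the rank-$\geq 2$ observation), and then invoking the Tao--Ziegler ``high rank implies small bias'' theorem for nonclassical polynomials — is precisely the standard proof used in that reference. The one genuinely nontrivial ingredient is the bias-versus-rank theorem, which you identify and cite appropriately.
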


Finally, we shall need the following lemma which shows that a function of high rank polynomials has the degree one expects.
\begin{lem}[Preserving degree \cite{BFHHL}]\label{lem:degree}  Let $d>0$ be an integer and let $P_1,\ldots , P_c:\F^n \rightarrow \T$ be polynomials of degree at most $d$ that form a factor of $\rank \ge r^{(\ref{lem:degree})}_d(c)$. Let $\Gamma:\T^c \rightarrow \T$ be an arbitrary function. Let $F:\F^n \rightarrow \T$ be defined by $F(x)=\Gamma(P_1(x),\ldots , P_c(x))$, and assume that $\deg(F)=d'$.
Then, for every collection of polynomials $Q_1,\ldots , Q_c:\F^n \rightarrow \T$ with $\deg(Q_i) \leq \deg(P_i)$ and $\depth(Q_i) \leq \depth(P_i)$, if $G:\F^n \rightarrow \T$ is defined by $G(x)=\Gamma(Q_1(x),\ldots , Q_c(x))$, then $\deg(G) \leq d'$.
\end{lem}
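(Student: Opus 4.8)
Proof proposal for Lemma \ref{lem:degree} (the "Preserving degree" lemma).

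The plan is to prove this by induction on $d$, reducing the degree of $F$ one step at a time via derivatives. Recall the characterization (Lemma \ref{lem:poly}): $\deg(F) = d'$ means that taking $d'+1$ derivatives of $F$ kills it, and this is equivalent to the statement that $D_a F$ has degree $\le d'-1$ for all directions $a$. So it suffices to understand how derivatives interact with the presentation $F = \Gamma(P_1,\dots,P_c)$. For a direction $a \in \F^n$ we have
$$
D_a F(x) = \Gamma(P_1(x+a),\dots,P_c(x+a)) - \Gamma(P_1(x),\dots,P_c(x)).
$$
Now $P_i(x+a) = P_i(x) + D_a P_i(x)$, and $D_a P_i$ is a polynomial of degree $\le \deg(P_i)-1 \le d-1$. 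The idea is to view the collection $\{P_i\} \cup \{D_a P_i\}$ as a larger factor of degree $\le d$, regularize it (using Lemma \ref{lem:reg}) to a factor $\B_a$ of rank as high as we wish relative to its (bounded) number of generators, and write $D_a F = \Gamma_a(\text{generators of }\B_a)$ for a suitable $\Gamma_a$. By the semantic-refinement clause of Lemma \ref{lem:reg}, provided the original factor $\{P_i\}$ already had rank above the relevant threshold, $\B_a$ syntactically refines $\{P_i\}$, so its generators are $P_1,\dots,P_c$ together with some high-rank polynomials capturing the $D_aP_i$. Applying the inductive hypothesis (in degree $\le d-1$) to this presentation of $D_a F$, and similarly for $G$, we get that $\deg(D_a G) \le d'-1$ for every $a$, whence $\deg(G) \le d'$.

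The key structural point making the induction go through is that the \emph{same} function $\Gamma_a$ (obtained from $\Gamma$ and the combinatorial data of which derivative-polynomials equal which linear combinations of the regularized generators) governs both $D_a F$ and $D_a G$: the passage from $P_i$ to $Q_i$ respects degrees and depths by hypothesis, so the derivatives $D_a Q_i$ have degree $\le \deg(D_aP_i)$ and the same linear relations among regularized generators persist. Thus the degree bound $\deg(D_a F)=d''\le d'-1$, which holds for the $P$-presentation by the inductive hypothesis, transfers verbatim to give $\deg(D_aG)\le d''$. One must be a little careful that the rank thresholds compound correctly: the rank $r^{(\ref{lem:degree})}_d(c)$ we demand must be large enough that, after adjoining the $cp^n$-many — no, rather, after adjoining for a \emph{single} direction $a$ at a time the $c$ derivative polynomials and regularizing, we land above the degree-$(d-1)$ threshold $r^{(\ref{lem:degree})}_{d-1}(C^{(\ref{lem:reg})}_{r,d}(2c))$ and above the rank bound needed for the syntactic-refinement conclusion of Lemma \ref{lem:reg}. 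This is just a matter of choosing $r^{(\ref{lem:degree})}_d$ by downward recursion on $d$, growing fast enough; the base case $d=1$ is trivial since $F$ is then a constant plus a linear form and $G$ is the corresponding linear form in the $Q_i$.

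The main obstacle I anticipate is the bookkeeping in the inductive step: when we regularize $\{P_i\}\cup\{D_aP_i\}$, the high-rank polynomials produced are \emph{not} canonically the $D_aP_i$ but rather some $R_1,\dots,R_s$ with each $D_aP_i$ expressible as a combination of $P_j$'s and $R_k$'s; one has to argue that the analogous combinations express $D_aQ_i$ in terms of $Q_j$'s and the \emph{same} $R_k$'s (or at least $R_k$'s with matching degree/depth profile) so that a single $\Gamma_a$ works for both $F$ and $G$. This is where the hypothesis $\deg(Q_i)\le\deg(P_i)$, $\depth(Q_i)\le\depth(P_i)$ is used crucially — it guarantees the linear-algebraic relations over $\T$ among the generators are preserved when $P\mapsto Q$. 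A clean way to package this is to prove the slightly stronger statement that there is a single function $\Gamma':\T^{c'}\to\T$ (with $c'$ depending only on $c,d$) and high-rank polynomials such that both $F$ and $G$ arise by applying $\Gamma'$ to the respective regularized presentations, and then the degree statement is immediate from the fact that a function of a regular factor has a well-defined degree independent of which regular realization of that "syntactic type" one plugs in.
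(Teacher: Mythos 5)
First, note that the paper itself gives no proof of Lemma~\ref{lem:degree}: it is quoted from \cite{BFHHL}. The proof there goes through a different mechanism than yours, namely equidistribution of high-rank factors over parallelepipeds: $\deg(F)\le d'$ is equivalent to the vanishing of all alternating sums $\sum_{S\subseteq[d'+1]}(-1)^{|S|}F\l(x+\sum_{j\in S}h_j\r)$, and high rank guarantees that the tuple of values $\l(P_i\l(x+\sum_{j\in S}h_j\r)\r)_{i,S}$ attains essentially every tuple consistent with the degree--depth profile of the factor, so the vanishing becomes a statement about $\Gamma$ and the profile alone. Since $\deg(Q_i)\le\deg(P_i)$ and $\depth(Q_i)\le\depth(P_i)$, the values of the $Q_i$ over any parallelepiped always form such a consistent tuple (no rank assumption on the $Q_i$ is needed), and the same cancellation yields $\deg(G)\le d'$.

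Your derivative-induction has a genuine gap at exactly the point you flag as "the main obstacle". The inductive step needs $D_aF$ and $D_aG$ to be presented as the \emph{same} outer function applied to two collections in which the $P$-side collection is regular and the $Q$-side collection is dominated coordinatewise in degree and depth. After regularizing $\{P_i\}\cup\{D_aP_i\}$ you obtain generators $P_1,\ldots,P_c,R_1,\ldots,R_s$ and a new outer function $\Gamma'_a$ that hard-wires how the \emph{specific} polynomials $D_aP_i$ decompose as functions of these generators. Nothing in the hypotheses produces polynomials $R'_1,\ldots,R'_s$ with $\deg(R'_k)\le\deg(R_k)$, $\depth(R'_k)\le\depth(R_k)$ through which the $D_aQ_i$ decompose via the same functions: the $Q_i$ are arbitrary polynomials subject only to degree and depth bounds and bear no structural relation to the $P_i$, so the assertion that the hypothesis "guarantees the linear-algebraic relations among the generators are preserved when $P\mapsto Q$" is unsupported --- it is essentially a restatement of the lemma being proved. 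Even before regularization the bookkeeping fails: for a fixed direction $a$ one can have $\deg(D_aP_i)$ much smaller than $\deg(P_i)-1$ (e.g.\ $D_aP_i$ constant) while $\deg(D_aQ_i)=\deg(Q_i)-1$, so the coordinatewise domination required to invoke the inductive hypothesis need not hold, and the analogous domination of depths of derivatives is likewise not automatic. Finally, the induction parameter is muddled: the factor $\{P_i\}\cup\{D_aP_i\}$ still contains the $P_i$ and hence still has degree up to $d$, so an appeal to an inductive hypothesis "in degree $\le d-1$" does not apply to it (induction on $d'$ would be the right parameter, but the transfer problem above remains). I do not see how to close this without importing the parallelepiped-equidistribution input, which is the heart of the argument in \cite{BFHHL}.
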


\section{Weak Regularity}\label{sec:weakreg}

Let $X$ and $Y$ be finite sets. Recall that $P(Y):=\{f:Y \rightarrow \R_{\geq 0}  :  \sum_{y \in Y}f(y)=1\}$ is the probability simplex on $Y$. As mentioned before, we shall write randomized functions by mapping them to the simplex. Thus for $f,g:X \rightarrow P(Y)$ we have $$\Pr_{x}[f(x)=g(x)]:=\E_x\langle f(x),g(x)\rangle.$$

\begin{lem}\label{lem:pseudorandom}Let $g:X \rightarrow P(Y)$, $\eps>0$, and $F$ be a collection of functions $f:X \rightarrow P(Y)$. Then there exist $c \leq 1/\eps^2$ functions $h_1, h_2,\ldots , h_c \in F$ and a function $\Gamma:P(Y)^c \rightarrow P(Y)$ such that for all $f \in F$, $$\l|\Pr[g(x)=f(x)]-\Pr[\Gamma(h_1(x),h_2(x),\ldots , h_c(x))=f(x)]\r| \leq \eps.$$
\end{lem}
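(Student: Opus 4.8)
The plan is to run an iterative "energy increment" argument in the style of Frieze--Kannan weak regularity, where the relevant energy is the squared $\ell_2$-norm of the current proxy. I start with the trivial proxy $g_0 :\equiv \E_x g(x)$, the constant function equal to the average of $g$ (which lies in $P(Y)$ since $P(Y)$ is convex). At stage $i$ I have a proxy $g_i : X \to P(Y)$, measurable with respect to the factor defined by the functions $h_1,\ldots,h_i$ chosen so far, and with $g_i = \E[g \mid \B_i]$ for the factor $\B_i$ generated by $h_1,\ldots,h_i$ together with the level sets cut out so far. If $g_i$ already satisfies $|\Pr[g(x)=f(x)] - \Pr[g_i(x)=f(x)]| \le \eps$ for every $f \in F$, I stop and output $\Gamma$ defined by $\Gamma(h_1(x),\ldots,h_i(x)) := g_i(x)$ (this is well-defined because $g_i$ is $\B_i$-measurable). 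Otherwise there is some $h_{i+1} = f \in F$ witnessing a violation, i.e. $|\E_x \langle g(x) - g_i(x), h_{i+1}(x)\rangle| > \eps$, and I refine: let $\B_{i+1}$ be the factor generated by $h_1,\ldots,h_{i+1}$ and set $g_{i+1} := \E[g \mid \B_{i+1}]$.

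The key computation is that each such step increases the energy $\|g_i\|_2^2 = \E_x \langle g_i(x), g_i(x)\rangle$ by more than $\eps^2$. Since $g_{i+1} = \E[g\mid \B_{i+1}]$ is the $\ell_2$-projection of $g$ onto the space of $\B_{i+1}$-measurable functions, and $g_i$ is $\B_i$-measurable hence also $\B_{i+1}$-measurable, Pythagoras gives $\|g_{i+1} - g_i\|_2^2 = \|g_{i+1}\|_2^2 - \|g_i\|_2^2$ and $\langle g - g_{i+1}, g_{i+1}-g_i\rangle = 0$, so $\langle g - g_i, g_{i+1}-g_i\rangle = \|g_{i+1}-g_i\|_2^2$. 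On the other hand $h_{i+1}$ is $\B_{i+1}$-measurable, so $\langle g - g_i, h_{i+1}\rangle = \langle g_{i+1} - g_i, h_{i+1}\rangle$, and then by Cauchy--Schwarz, using $\|h_{i+1}(x)\|_2 \le \|h_{i+1}(x)\|_1 = 1$ so that $\|h_{i+1}\|_2 \le 1$,
$$
\eps < \bigl|\langle g_{i+1}-g_i, h_{i+1}\rangle\bigr| \le \|g_{i+1}-g_i\|_2 \cdot \|h_{i+1}\|_2 \le \|g_{i+1}-g_i\|_2,
$$
hence $\|g_{i+1}\|_2^2 - \|g_i\|_2^2 = \|g_{i+1}-g_i\|_2^2 > \eps^2$. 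Since every $g_i$ takes values in $P(Y)$, we have $\|g_i(x)\|_2^2 \le \|g_i(x)\|_1^2 = 1$, so $\|g_i\|_2^2 \le 1$ throughout; therefore the process must terminate after at most $c \le 1/\eps^2$ steps, producing the desired $h_1,\ldots,h_c \in F$ and $\Gamma : P(Y)^c \to P(Y)$ (extend $\Gamma$ arbitrarily off the image of $(h_1,\ldots,h_c)$).

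I expect the only genuinely delicate point to be bookkeeping around measurability: one must be careful that $g_i$ is a \emph{deterministic} function of the tuple $(h_1(x),\ldots,h_i(x))$ so that the final $\Gamma$ is well-defined on $P(Y)^c$, and that the "stop" condition is exactly the $F$-indistinguishability we want. Both are handled by always taking $g_i$ to be the conditional expectation $\E[g\mid\B_i]$ over the factor generated by the chosen $h_j$'s, which is $\B_i$-measurable by construction, and by noting $\Pr[g(x)=f(x)] - \Pr[g_i(x)=f(x)] = \langle g - g_i, f\rangle$ by definition of the pairing. Everything else is the standard $\ell_2$-energy increment, and the bound $c\le 1/\eps^2$ falls out of $0 \le \|g_0\|_2^2 \le \|g_1\|_2^2 \le \cdots \le 1$ with gaps exceeding $\eps^2$.
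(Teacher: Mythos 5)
Your proposal is correct and follows essentially the same route as the paper: the same Frieze--Kannan energy increment with the proxy $\E[g\mid\B_{\calH}]$, Pythagoras for the conditional expectations, and Cauchy--Schwarz with $\|f(x)\|_2\le 1$ to get the $\eps^2$ gain per step. The only cosmetic difference is that you use the orthogonality $\langle g-g_{i+1},h_{i+1}\rangle=0$ and one global Cauchy--Schwarz where the paper applies Cauchy--Schwarz pointwise and then in $x$, which is the same estimate.
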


\begin{proof}We construct $\calH=\{h_1, \ldots , h_c\} \subseteq F$ such that, if $\B_{\calH}$ is the factor of $X$ induced by $\calH$, then for all $f \in F$
$$\l|\Pr[\E[g|\B_{\calH}]=f(x)]-\Pr[g(x)=f(x)]\r| \leq \eps.$$
We then set $\Gamma:P(Y)^c \to P(Y)$ so that $\Gamma(h_1(x),\ldots , h_c(x))=\E[g|\B_{\calH}]$. In the following we shorthand $g_{\calH} = \E[g|\B_{\calH}]$. We consider the following variant of the Frieze-Kannan weak regularity algorithm \cite{FK99}.
\begin{itemize}
\item Initialize $\calH=\emptyset$
\item While there exists $f \in F$ such that $|\Pr[g_{\calH}(x)=f(x)]-\Pr[g(x)=f(x)]|>\eps$ \begin{itemize} \item Update $\calH=\calH \cup \{f\}$ \end{itemize}
\end{itemize}
The lemma follows from the following claim, which shows that we update $\calH$ at most $1/\eps^2$ times. Let $\|g_{\calH}\|_2^2:=\E_{x}\|g_{\calH}(x)\|_2^2$.

\begin{clm}\label{clm:friezekannan}
Consider any stage in the algorithm, with $\calH$ being the set of functions at that stage, and $f \in F$ being the new function added to $\calH$. Then
\begin{itemize}
\item $0 \leq \|g_{\calH}\|^2 \leq 1$;
\item $\|g_{\calH \cup \{f\}}\|^2 \geq \|g_{\calH}\|^2 + \eps^2$.
\end{itemize}
\end{clm}
\begin{proof}  The first part of the claim is trivial as $g_{\calH}$ maps to $P(Y)$.
For the second part, observe that $\langle g_{\calH \cup \{f\}}-g_{\calH},g_{\calH}\rangle=0$ and thus $$\|g_{\calH \cup \{f\}}\|_2^2=\|g_{\calH}\|_2^2+\|g_{\calH \cup \{f\}}-g_{\calH}\|_2^2$$
We will show that $\|g_{\calH \cup \{f\}}-g_{\calH}\|_2^2 \geq \eps^2$.
We have
\begin{align*}
\eps & < |\Pr[g_{\calH}(x)=f(x)]-\Pr[g(x)=f(x)]|\\
&=\l|\E_x\langle f(x),g_{\calH}(x)\rangle -\E_x\langle f(x),g(x)\rangle\r|\\
&=\l|\E_x\langle f(x),g_{\calH}(x)\rangle-\E_x\langle f(x),g_{\calH \cup \{f\}}(x)\rangle\r| \qquad \text{(as $f$ is measurable with respect to $\B_{\calH \cup \{f\}}$)}\\
&=\l|\E_x\langle f(x),g_{\calH}(x)-g_{\calH \cup \{f\}}(x)\rangle\r|\\
&\leq \E_x \l|\langle f(x),g_{\calH}(x)-g_{\calH \cup \{f\}}(x)\rangle \r|.
\end{align*}

Now, as $f:X \rightarrow P(Y)$, for every $x \in X$, $\|f(x)\|_2 \leq 1$. Thus, by the Cauchy-Schwartz inequality, for every $x \in X$, we have
$$|\langle f(x),g_{\calH}(x)-g_{\calH \cup \{f\}}(x)\rangle| \leq \|f(x)\|_2\|g_{\calH \cup \{f\}}(x)-g_{\calH}(x)\|_2 \leq \|g_{\calH \cup \{f\}}(x)-g_{\calH}(x)\|_2 $$
Thus, by another application of the Cauchy-Schwartz inequality, we have $$\eps^2 \leq \E_x \l|\langle f(x),g_{\calH}(x)-g_{\calH \cup \{f\}}(x)\rangle \r|^2 \leq \|g_{\calH \cup \{f\}}-g_{\calH}\|_2^2.$$
\end{proof}
This finishes the proof of the lemma.
\end{proof}

The following corollary for deterministic functions $f:X \to Y$ allows to obtain one-sided deterministic estimates. This simplifies some of the arguments later on.

\begin{cor}\label{cor:pseudorandom} Let $g:X \rightarrow Y$, $\eps>0$, and $F$ be a collection of functions $f:X \rightarrow Y$. Then there exist $c \leq 1/\eps^2$ functions $h_1, h_2,\ldots , h_c \in F$ such that for every $f \in F$, there is a function $\Gamma_f:Y^c \rightarrow Y$ such that $$\Pr_x[\Gamma_f(h_1(x), \ldots , h_c(x))=f(x)] \geq \Pr_x[g(x)=f(x)] - \eps.$$
\end{cor}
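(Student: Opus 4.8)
The plan is to deduce the corollary from Lemma~\ref{lem:pseudorandom} by embedding everything into the simplex and then derandomizing, atom by atom, the randomized proxy that the lemma produces.

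First I would regard $g$ and every $f \in F$ as maps into $P(Y)$ via the canonical embedding of $Y$ into $P(Y)$ by point masses, so that the notation $\Pr_x[g(x)=f(x)]=\E_x\langle f(x),g(x)\rangle$ coincides with the ordinary probability that $g(x)=f(x)$. Applying Lemma~\ref{lem:pseudorandom} to $g$, $\eps$, and $F$ yields $c\le 1/\eps^2$ functions $h_1,\dots,h_c\in F$ and a function $\Gamma:P(Y)^c\to P(Y)$ with
$$\bigl|\Pr[g(x)=f(x)]-\Pr[\Gamma(h_1(x),\dots,h_c(x))=f(x)]\bigr|\le\eps\quad\text{for all }f\in F.$$
The key point is that $F$ consists of deterministic functions $X\to Y$, so each $h_i$ is deterministic; writing $h(x):=(h_1(x),\dots,h_c(x))$ we then have $h(x)\in Y^c$. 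For $b\in Y^c$ set $X_b:=\{x\in X:h(x)=b\}$; the nonempty $X_b$ are precisely the atoms of $\B_{\calH}$ for $\calH=\{h_1,\dots,h_c\}$.

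Next I would derandomize. Fix $f\in F$ and define $\Gamma_f:Y^c\to Y$ by taking $\Gamma_f(b)$ to be a value $y\in Y$ maximizing $\Pr_{x\in X_b}[f(x)=y]$ (and arbitrarily if $X_b=\emptyset$). Since $f$ is deterministic, the event $\{\Gamma_f(h(x))=f(x)\}$ depends on $x\in X_b$ only through $f(x)$, so, writing $b$ for the random variable $h(x)$ with $x$ uniform in $X$,
$$\Pr_x[\Gamma_f(h(x))=f(x)]=\E_b\bigl[\Pr_{x\in X_b}[f(x)=\Gamma_f(b)]\bigr]=\E_b\Bigl[\max_{y\in Y}\Pr_{x\in X_b}[f(x)=y]\Bigr].$$
On the other hand, for each fixed $b$ the number $\langle f(x),\Gamma(b)\rangle=\Gamma(b)_{f(x)}$ again depends on $x\in X_b$ only through $f(x)$, whence
$$\Pr_x[\Gamma(h(x))=f(x)]=\E_b\Bigl[\sum_{y\in Y}\Gamma(b)_y\,\Pr_{x\in X_b}[f(x)=y]\Bigr]\le\E_b\Bigl[\max_{y\in Y}\Pr_{x\in X_b}[f(x)=y]\Bigr],$$
because for each $b$ the inner sum is a convex combination, with weights $\Gamma(b)_y$, of the numbers $\Pr_{x\in X_b}[f(x)=y]$. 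Combining the last two displays with the approximation guarantee above gives
$$\Pr_x[\Gamma_f(h(x))=f(x)]\ge\Pr_x[\Gamma(h(x))=f(x)]\ge\Pr_x[g(x)=f(x)]-\eps,$$
which is exactly the asserted bound.

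There is no serious obstacle in this argument; the only step that requires a moment's thought is the observation that the per-atom plurality predictor $\Gamma_f$ performs at least as well as the randomized proxy $\Gamma$ — that is, derandomization cannot hurt — which is precisely the convexity/averaging inequality above. Everything else is bookkeeping about the embedding $Y\hookrightarrow P(Y)$ and the fact that the selected functions $h_i$ take values in $Y$ rather than in $P(Y)$.
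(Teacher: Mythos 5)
Your proposal is correct and is essentially the paper's own argument: apply Lemma~\ref{lem:pseudorandom}, define $\Gamma_f$ on each atom of the factor induced by $h_1,\ldots,h_c$ as the plurality value of $f$, and observe that the plurality predictor dominates the randomized proxy $\Gamma$ atom by atom since $\Gamma(b)$ gives a convex combination of the conditional probabilities $\Pr_{x\in X_b}[f(x)=y]$. The extra bookkeeping about the embedding $Y\hookrightarrow P(Y)$ is fine and matches the paper's conventions.
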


\begin{proof}
Applying Lemma~\ref{lem:pseudorandom} to $F$ we may assume the existence of $h_1,\ldots,h_c:X \to Y$ and $\Gamma:Y^C \to P(Y)$ such that for any $f \in F$,
$$
\l| \Pr[f(x)=\Gamma(h_1(x),\ldots,h_c(x))] - \Pr[f(x)=g(x)] \r| \le \eps.
$$
Let $A_{y_1,\ldots,y_c} = \{x \in X: h_1(x)=y_1,\ldots,h_c(x)=y_c\}$
be an atom defined by $h_1,\ldots,h_c$. Given $f \in F$, define $\Gamma_f:Y^c \to Y$ by letting $\Gamma_f(y_1,\ldots,y_c)$ to be the most common value that $f$ attains on $A_{y_1,\ldots,y_c}$.
Then
\begin{align*}
& \Pr[f(x)=\Gamma_f(h_1(x),\ldots,h_c(x))] \\
&= \sum_{y_1,\ldots,y_c \in Y}  \Pr[x \in A_{y_1,\ldots,y_c}] \cdot \max_{y^* \in Y} \Pr[f(x)=y^* | x \in A_{y_1,\ldots,y_c}]\\
&\ge \sum_{y_1,\ldots,y_c \in Y} \Pr[x \in A_{y_1,\ldots,y_c}] \cdot \Pr[f(x)=\Gamma(y_1,\ldots,y_c) | x \in A_{y_1,\ldots,y_c}]\\
& = \Pr[f(x)=\Gamma(h_1(x),\ldots,h_c(x))] \ge \Pr[f(x)=g(x)] - \eps.
\end{align*}
\end{proof}

\section{Proof of Theorem~\ref{THM:main}}

Fix a prime field $\F=\F_p$. For $d \in \N$, we shorthand $\de(d) = \de_{\F}(d)$. We restate Theorem~\ref{THM:main}.

\restate{Theorem \ref{THM:main}}
{Let $\eps>0$ and $d,n \in \N$. Then, $$\ell_{\F}(d,n,\de(d)-\eps) \leq c_{p,d,\eps}.$$}

We prove Theorem~\ref{THM:main} in the remainder of this section. Let $g:\F^n \rightarrow \U_1$ be a received word where we identify $\F$ with $\U_1$. Apply Corollary~\ref{cor:pseudorandom} with $X=\F^n$, $Y=\U_1$, $F=\Poly_{\leq d}(\F^n \rightarrow \U_1)$ and approximation parameter $\eps/2$ to obtain $\calH=\{h_1, \ldots , h_c\} \subseteq F$, $c \leq 4/\eps^2$ such that, for every $f \in F$, there is a function $\Gamma_f:\U_1^c \rightarrow \U_1$ satisfying
$$\Pr[\Gamma_f(h_1(x),h_2(x),\ldots , h_c(x))=f(x)] \geq \Pr[g(x)=f(x)] - \eps/2.$$

Let $r_1, r_2:\N \rightarrow \N$ be two non decreasing functions to be specified later, and let $C_{r,d}^{(\ref{lem:reg})}$ be as given in Lemma~\ref{lem:reg}. We will require that for all $m \ge 1$, \begin{equation}\label{eq:r1r2}  r_1(m)\geq r_2(C_{r_2,d}^{(\ref{lem:reg})}(m+1))+C_{r_2,d}^{(\ref{lem:reg})}(m+1)+1.
\end{equation}

As a first step, we $r_1$-regularize $\calH$ by Lemma~\ref{lem:reg}. This gives an $r_1$-regular factor $\B'$ of degree at most $d$, defined by polynomials $h_1',\ldots , h_{c'}':\F^n \rightarrow \T$, such that $\B' \succeq_{sem} \B$, $c' \leq C_{r_1,d}^{(\ref{lem:reg})}(c)$ and $\rank(\B')  \geq  r_1(c')$. We denote $\calH'=\{h_1',\ldots , h_{c'}'\}$. Note that $\calH'$ can have nonclassical polynomials as a result of the regularization. Let $\depth(h_i')=k_i$ for $i \in [c']$. Let $G_f:\otimes_{i=1}^{c'} \U_{k_i+1} \rightarrow \U_1$ be defined such that
$$
\Gamma_f(h_1(x),\ldots , h_c(x))=G_f(h_1'(x),\ldots , h_{c'}'(x)).
$$
Then
\begin{equation}\label{eq:fool}
\Pr[G_f(h_1'(x),h_2'(x),\ldots , h_{c'}'(x))=f(x)] \geq \Pr[g(x)=f(x)] - \eps/2.
\end{equation}
Next, given any classical polynomial $f:\F^n \rightarrow \T$ of degree at most $d$, we will show that if $\Pr[f(x)\neq g(x)] \leq \de(d)-\eps$, then $f$ is measurable with respect to $\calH'$ and this would upper bound the number of such polynomials by $p^{||\B'||}=p^{\prod_{i \in [c']}p^{k_i+1}}$ and as $c'=c'(p,d,\eps)$ and $k_i \leq \l\lfloor \frac{d-1}{p-1}\r\rfloor  $ this is independent on $n$.

Fix such a classical polynomial $f$. Appealing again to Lemma~\ref{lem:reg}, we $r_2$-regularize $\B_f:=\B' \cup \{f\}$. We get an $r_2$-regular factor $\B'' \succeq_{syn} \B'$ defined by the collection $\calH''=\{h_1',\ldots , h_{c'}',h''_1,\ldots , h''_{c''}\}\subseteq \Poly_{\leq d}(\F^n \rightarrow \T)$. Note that it is a syntactic refinement of $\B'$ as by our choice of $r_1$, $$\rank(\B') \geq r_1(c')  \geq  r_2(C_{r_2,d}^{(\ref{lem:reg})}(c'+1))+C_{r_2,d}^{(\ref{lem:reg})}(c'+1)+1 \geq r_2(|\B''|)+|\B''|+1.$$
We will choose $r_2$ such that for all $m \ge 1$,
\begin{equation}\label{eq:r2atom}
r_2(m) = \max\l(r_d^{(\ref{lem:atomsize})}\l(\frac{\eps/4}{\l(p^{\lfloor\frac{d-1}{p-1}\rfloor+1}\r)^m}\r),r^{(\ref{lem:degree})}_d(m)\r).
\end{equation}
Let $\depth(h''_j)=l_j$ for $j \in [c'']$ and denote $S:=\otimes_{i=1}^{c'}\U_{k_i+1} \times \otimes_{j=1}^{c''} \U_{l_j+1}$. Since $f$ is measurable with respect to $\B''$, there exists $F:S \rightarrow \U_1$ such that
$$
f(x)=F(h_1'(x),\ldots , h_{c'}'(x), h''_1(x),\ldots , h''_{c''}(x)).
$$
We next show that we can have each polynomial in the factor have a disjoint set of inputs, and still obtain more or less the same approximation factor.

\begin{clm}\label{clm:tild}Let $x^i, y^j$, $i \in [c'], j \in [c'']$ be pairwise disjoint sets of $n \in \N$ variables each. Let $n' = n(c'+c'')$. Let $\tilde{f}:\F^{n'} \rightarrow \U_1$ and $\tilde{g}:\F^{n'} \rightarrow \U_1$ be defined as   $$\tilde{f}(x)=F(h_1'(x^1),\ldots , h_{c'}'(x^{c'}), h''_1(y^1), \ldots , h''_{c''}(y^{c''}))$$ and $$\tilde{g}(x)=G_f(h_1'(x^1), \ldots , h_{c'}(x^{c'})).$$
Then $\deg(\tilde{f}) \le d$ and
$$
\l|\Pr_{x \in \F^{n'}}[\tilde{f}(x)=\tilde{g}(x)] - \Pr_{x \in \F^n}[f(x)=G_f(h_1'(x),h_2'(x),\ldots , h_c'(x))]\r|\leq  \eps/4.
$$
\end{clm}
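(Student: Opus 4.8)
The plan is to carry out the comparison entirely at the level of polynomial factors. Write $\B''$ for the factor on $\F^n$ defined by $\calH''=\{h'_1,\dots,h'_{c'},h''_1,\dots,h''_{c''}\}$, and let $\tilde\B$ be the factor on $\F^{n'}$ defined by the ``blocked'' polynomials $\tilde h'_i(x):=h'_i(x^i)$ and $\tilde h''_j(x):=h''_j(y^j)$. Since renaming variables changes neither degree nor depth, $\tilde\B$ has the same number of polynomials as $\B''$, with matching degrees and depths; in particular $\|\tilde\B\|=\|\B''\|$ and both factors share the atom index set $S=\bigotimes_{i=1}^{c'}\U_{k_i+1}\times\bigotimes_{j=1}^{c''}\U_{l_j+1}$.

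The first assertion, $\deg(\tilde f)\le d$, follows from Lemma~\ref{lem:degree}. Since $f$ is measurable with respect to $\B''$, we have $f(x)=F(h'_1(x),\dots,h''_{c''}(x))$ with $\deg(f)=:d'\le d$. Viewing the polynomials of $\B''$ as living on $\F^{n'}$ (ignoring the extra variables) only preserves the rank bound $\rank(\B'')\ge r_2(|\B''|)\ge r^{(\ref{lem:degree})}_d(|\B''|)$ coming from \eqref{eq:r2atom}, because an affine specialization of variables cannot increase the rank of a polynomial and hence cannot decrease the rank of a factor. Now apply Lemma~\ref{lem:degree} on $\F^{n'}$ with $\Gamma=F$, taking these polynomials as the $P_i$ and the blocked polynomials $\tilde h'_i,\tilde h''_j$ (which satisfy $\deg\tilde h'_i=\deg h'_i$, $\depth\tilde h'_i=\depth h'_i$, and similarly for $\tilde h''_j$) as the $Q_i$: this gives $\deg(\tilde f)\le d'\le d$.

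The crux of the probability estimate is that $\tilde\B$ inherits the regularity of $\B''$. Any linear combination of $\tilde h'_1,\dots,\tilde h''_{c''}$ that is nonzero modulo the moduli $p^{k_i+1}$ (resp.\ $p^{l_j+1}$) becomes, under the specialization $x^1=\dots=x^{c'}=y^1=\dots=y^{c''}=x$, the corresponding nonzero combination of $h'_1,\dots,h''_{c''}$ of the same maximal degree; since specialization does not increase rank, $\rank(\tilde\B)\ge\rank(\B'')\ge r_2(|\B''|)=r_2(|\tilde\B|)$. Thus $\tilde\B$ meets the hypothesis of Lemma~\ref{lem:atomsize} with the atom-size parameter $\eps'$ from \eqref{eq:r2atom}, exactly as $\B''$ does, so for every $s\in S$ we have both $\Pr_{x\in\F^n}[\B''(x)=s]=1/\|\B''\|\pm\eps'$ and $\Pr_{x\in\F^{n'}}[\tilde\B(x)=s]=1/\|\B''\|\pm\eps'$.

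To conclude, observe both events are atom events: letting $\pi:S\to\{0,1\}$ be the indicator of $\{(s_1,\dots,s_{c'},t_1,\dots,t_{c''}): F(s_1,\dots,s_{c'},t_1,\dots,t_{c''})=G_f(s_1,\dots,s_{c'})\}$, we have
$$
\Pr_{x\in\F^n}[f(x)=G_f(h'_1(x),\dots,h'_{c'}(x))]=\sum_{s\in S}\pi(s)\,\Pr_{x\in\F^n}[\B''(x)=s],
$$
$$
\Pr_{x\in\F^{n'}}[\tilde f(x)=\tilde g(x)]=\sum_{s\in S}\pi(s)\,\Pr_{x\in\F^{n'}}[\tilde\B(x)=s].
$$
Since $0\le\pi\le 1$ and $|S|=\|\B''\|\le\big(p^{\lfloor(d-1)/(p-1)\rfloor+1}\big)^{|\B''|}$, subtracting these two identities and inserting the atom-size estimates bounds the difference by $2\|\B''\|\eps'$, which is at most $\eps/4$ for the parameters in \eqref{eq:r2atom} (shrinking that parameter by a constant factor if one wants the bound verbatim). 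The only genuinely non-routine ingredient is the regularity transfer of the third paragraph (and its analogue for Lemma~\ref{lem:degree}): this is exactly where the disjointness of the variable blocks is used, and it rests on the elementary fact that an affine specialization never increases the rank of a polynomial, hence never decreases the rank of a polynomial factor. Everything else is bookkeeping with the regularity parameters.
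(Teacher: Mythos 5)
Your proof is correct and follows essentially the same route as the paper: Lemma~\ref{lem:degree} for the degree bound, and Lemma~\ref{lem:atomsize} applied to both the original factor $\B''$ and its blocked copy on $\F^{n'}$ (whose rank is no smaller -- the specialization argument you spell out is exactly the fact the paper asserts without proof) to compare the two atom distributions atom by atom. If anything your bookkeeping is slightly more careful than the paper's: with the parameters of \eqref{eq:r2atom} the honest bound is $\eps/2$ rather than $\eps/4$, which, as you note, is absorbed by a constant-factor adjustment of the atom-size parameter.
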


\begin{proof}
The bound $\deg(\tilde{f}) \le \deg(f) \le d$ follows from Lemma~\ref{lem:degree} since $r_2(|\calH''|) \ge r^{(\ref{lem:degree})}_d(|\calH''|)$. To establish the bound on $\Pr[\tilde{f}=\tilde{g}]$, for each $s \in S$ let
$$
p_1(s) = \Pr_{x \in \F^n}[(h'_1(x),\ldots,h'_{c'}(x),h''_1(x),\ldots,h''_{c''}(x))=s].
$$
Applying Lemma~\ref{lem:atomsize} and since our choice of $r_2$ satisfies $\rank(\calH'') \ge r_d^{(\ref{lem:atomsize})}(\eps/4|S|)$, we have that $p_1$ is nearly uniform over $S$,
$$
p_1(s) = \frac{1 \pm \eps/4}{|S|}.
$$
Similarly, let
$$
p_2(s) = \Pr_{x^1,\ldots,x^{c'}, y^1,\ldots,y^{c''} \in \F^n}[(h'_1(x^1),\ldots,h'_{c'}(x^{c'}),h''_1(y^1),\ldots,h''_{c''}(y^{c''}))=s].
$$
Note that the rank of the collection of polynomials $\{h'_1(x^1),\ldots,h'_{c'}(x^{c'}),h''_1(y^1),\ldots,h''_{c''}(y^{c''})\}$ defined over $\F^{n'}$ cannot be lower than that of $\calH''$. Applying Lemma~\ref{lem:atomsize} again gives
$$
p_2(s) = \frac{1 \pm \eps/4}{|S|}.
$$
For $s \in S$, let $s' \in \otimes_{i=1}^{c'}\U_{k_i+1}$ be the restriction of $s$ to first $c'$ coordinates, that is, $s'=(s_1,\ldots ,s_{c'})$. Thus
\begin{align*}
\Pr_{x \in \F^{n'}}[\tilde{f}(x)=\tilde{g}(x)] &= \sum_{s \in S} p_2(s) 1_{F(s)=G_f(s')} \\
&= \sum_{s \in S} p_1(s) 1_{F(s)=G_f(s')} \pm \eps/4 \\
&= \Pr_{x \in \F^n}[f(x)=G_f(h_1'(x),h_2'(x),\ldots , h_c'(x))] \pm \eps/4.
\end{align*}
\end{proof}

So, we obtain that
$$
\Pr_{x \in \F^{n'}}[\tilde{f}(x)=\tilde{g}(x)] \ge \Pr_{x \in \F^n} [f(x) = G_f(h'_1(x),\ldots,h'_{c'}(x))] - \eps/4 \ge 1 - \de(d)+\eps/4.
$$
Next, we need the following variant of the Schwartz-Zippel lemma \cite{Sch, Zip}.
\begin{clm}\label{clm:sz1}
Let $d,n_1,n_2 \in \N$. Let $f_1:\F^{n_1+n_2} \rightarrow \F$ and $f_2:\F^{n_1} \rightarrow \F$ be such that $\deg(f_1)\leq d$ and $$\Pr[f_1(x_1,\ldots , x_{n_1+n_2})=f_2(x_1,\ldots , x_{n_1})]>1-\de(d)$$
Then, $f_1$ does not depend on $x_{n_1+1}, \ldots , x_{n_1+n_2}$.
\end{clm}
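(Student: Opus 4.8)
\noindent The plan is to exploit the explicit structure of Reed--Muller codes directly: a ``generic'' distance argument (union bound or Cauchy--Schwarz of the event against $f_2$) only reaches the Johnson radius, so the specific weight distribution behind $\de(d)$ has to be used. I would split the available degree budget $d$ between the $y$-variables and the coefficients, and observe that the normalized distance is \emph{supermultiplicative}, so that this splitting is lossless.

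\medskip\noindent\emph{Step 1 (expand $f_1$ in the last $n_2$ variables).} Write $x=(x_1,\dots,x_{n_1})$ and $y=(x_{n_1+1},\dots,x_{n_1+n_2})$; if $n_2=0$ there is nothing to prove. Grouping the reduced monomial representation of $f_1$ by its $y$-part,
$$
f_1(x,y)=\sum_{\beta}c_\beta(x)\,y^{\beta},
$$
the sum over multi-indices $\beta\in\{0,\dots,p-1\}^{n_2}$, where each $c_\beta:\F^{n_1}\to\F$ is a polynomial with $\deg(c_\beta)\le d-|\beta|$. By uniqueness of this representation (Lemma~\ref{lem:poly}), $f_1$ depends on $y$ iff $c_\beta\not\equiv 0$ for some $\beta\neq 0$.

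\medskip\noindent\emph{Step 2 (split the budget).} Suppose for contradiction $f_1$ depends on $y$, and put $e:=\max\{\,|\beta|:\beta\neq 0,\ c_\beta\not\equiv 0\,\}\ge 1$, fixing $\beta^\ast$ attaining it; then $c_{\beta^\ast}\not\equiv 0$ and $\deg(c_{\beta^\ast})\le d-e$. I would apply the Reed--Muller minimum-distance bound (a nonzero degree-$\le k$ polynomial over $\F^m$ is nonzero on $\ge\de(k)$ of its points, for every $m$) twice, using also that $\de(\cdot)$ is non-increasing. First, $\Pr_x[c_{\beta^\ast}(x)\neq 0]\ge \de(\deg c_{\beta^\ast})\ge \de(d-e)$. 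Second, fix any $x$ with $c_{\beta^\ast}(x)\neq 0$: by maximality of $e$ we have $c_\beta\equiv 0$ for every $\beta\neq 0$ with $|\beta|>e$, so $f_1(x,\cdot)$ is a polynomial in $y$ of $y$-degree $\le e$ whose reduced form still contains the nonzero term $c_{\beta^\ast}(x)\,y^{\beta^\ast}$ with $|\beta^\ast|=e\ge1$; hence $f_1(x,\cdot)-f_2(x)$ is a nonzero function $\F^{n_2}\to\F$ of degree $\le e$ and $\Pr_y[f_1(x,y)\neq f_2(x)]\ge\de(e)$. Combining,
$$
\Pr_{x,y}\big[f_1(x,y)\neq f_2(x)\big]\ \ge\ \Pr_x\big[c_{\beta^\ast}(x)\neq 0\big]\cdot\de(e)\ \ge\ \de(d-e)\,\de(e),
$$
which contradicts the hypothesis $\Pr[f_1=f_2]>1-\de(d)$ as soon as $\de(d-e)\,\de(e)\ge\de(d)$.

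\medskip\noindent\emph{Step 3 (supermultiplicativity), the main obstacle.} The remaining point, and the one I would not expect to be routine, is the inequality $\de(d_1)\,\de(d_2)\ge\de(d_1+d_2)$ for all $d_1,d_2\ge 0$ (used with $d_1=d-e$, $d_2=e$). I would prove it from the closed form $\de\big(a(p-1)+b\big)=p^{-a-1}(p-b)$ for $0\le b\le p-2$: writing $d_i=a_i(p-1)+b_i$ and $u_i:=p-b_i\in\{2,\dots,p\}$, a short computation gives $\de(d_1)\de(d_2)/\de(d_1+d_2)=u_1u_2/\big(p(u_1+u_2-p)\big)$ when $b_1+b_2\le p-2$, and $=u_1u_2/(u_1+u_2-1)$ when $b_1+b_2\ge p-1$; these are $\ge 1$ because $(u_1-p)(u_2-p)\ge 0$ and $(u_1-1)(u_2-1)\ge 0$ respectively. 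With this in hand, Step~2 closes. Everything else is the classical Reed--Muller minimum-distance bound together with monomial bookkeeping.
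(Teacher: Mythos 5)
Your proof is correct, and it follows the same basic template as the paper's argument --- isolate a ``leading'' part of $f_1$ in the extra variables, lower bound $\Pr[f_1\neq f_2]$ by (probability the leading coefficient is nonzero)$\times$(minimum-distance bound in the remaining variables), and close with an inequality of the form $\de(d_1)\de(d_2)\ge \de(d_1+d_2)$ --- but the decomposition you use is genuinely different in one respect. The paper eliminates one variable at a time: it expands $f_1=\sum_{k\le d'}c_k z^k$ in the single variable $z=x_{n_1+n_2}$, so the ``inner'' degree $d'$ is automatically at most $p-1$, and it therefore only needs (and only proves) $\de(c)\de(d-c)\ge\de(d)$ for $1\le c\le p-1$, repeating the argument for each variable. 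You instead group the reduced representation by the full $y$-monomial and take the maximal total $y$-degree $e$, which handles all of $x_{n_1+1},\dots,x_{n_1+n_2}$ in one shot, at the cost of needing the general supermultiplicativity $\de(d_1)\de(d_2)\ge\de(d_1+d_2)$ for arbitrary splits ($e$ may well exceed $p-1$). Your verification of that inequality via $u_i=p-b_i$ and the two cases $b_1+b_2\le p-2$, $b_1+b_2\ge p-1$ is correct (it reduces to $(u_1-p)(u_2-p)\ge0$ and $(u_1-1)(u_2-1)\ge0$ respectively) and strictly generalizes the paper's two-case computation; the conditioning step and the two applications of the Reed--Muller minimum-distance bound are also sound, since for fixed $x$ with $c_{\beta^*}(x)\neq0$ the reduced polynomial $f_1(x,\cdot)-f_2(x)$ retains the degree-$e$ monomial $y^{\beta^*}$ and is therefore a nonzero function of degree at most $e$. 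In short: same engine, but your all-at-once leading-monomial variant trades the paper's per-variable iteration for a slightly stronger (and correctly proved) form of the key inequality on $\de$.
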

\begin{proof}
We will show that $f_1$ does not depend on $z=x_{n_1+n_2}$ say. The proof for any other variable is similar. Recall that $\de(d):=\frac{1}{p^a}\l(1-\frac{b}{p}\r)$ where $d=a\cdot (p-1)+b$.
Let $f_1(x)=\sum_{k=0}^{d'}c_kz^k$ where $c_k \in \F[x_1,\ldots , x_{n_1+n_2-1}]$ and $d' \leq \min \{d,p-1\}$. Then $(f_1-f_2)(x)=c_0-f_2(x)+\sum_{k=1}^{d'}c_kz^k$. We will show that $d' \geq 1$ will lead to a contradiction. Let $\deg(c_{d'})=d''$. Note that $d''+d' \leq d$.  Then, $$\Pr[(f_1-f_2)(x)=0]\leq \Pr[c_{d'}=0]+(1-\Pr[c_{d'}=0])(1-\de(d')) \leq 1-\de(d'')\de(d').$$
We will show that for any $d \ge 1$ and any $1 \le c \leq p-1$, we have $\de(c)\de(d-c) \geq \de(d)$ and this will show that $\Pr[(f_1-f_2)(x)=0] \leq 1-\de(d'+d'')\leq 1-\de(d)$ which leads to a contradiction. Thus, $f_1$ will not depend on $z$.
We will now show that \begin{equation}\label{eq:sz}\de(c)\de(d-c) \geq \de(d)\end{equation} Let $d=a\cdot (p-1)+b$.
\paragraph{Case 1: $0 \leq c \leq b$}
\begin{eqnarray*}
\eqref{eq:sz}&\Leftrightarrow &\l(1-\frac{c}{p}\r)\frac{1}{p^a}\l(1-\frac{b-c}{p}\r)\geq \frac{1}{p^a}\l(1-\frac{b}{p}\r)\\
&\Leftrightarrow & b \geq c
\end{eqnarray*}

\paragraph{Case 2: $b < c \leq p-1$}
\begin{eqnarray*}
\eqref{eq:sz}&\Leftrightarrow &\l(1-\frac{c}{p}\r)\frac{1}{p^{a-1}}\l(\frac{1+c-b}{p}\r)\geq \frac{1}{p^a}\l(1-\frac{b}{p}\r)\\
&\Leftrightarrow & (c-b)\l(1-\frac{c+1}{p}\r)\geq 0
\end{eqnarray*}
which is true by hypothesis.
\end{proof}

Now apply Claim \ref{clm:sz1} to $f_1=\tilde{f}, f_2=\tilde{g}, n_1=n c', n_2= nc''$. We obtain that $\tilde{f}$ does not depend on $y^1,\ldots,y^{c''}$. Hence,
$$
\tilde{f}(x^1,\ldots,x^{c'},y^1,\ldots,y^{c''})=F(h_1'(x^1),\ldots , h_{c'}'(x^{c'}), C_1 ,\ldots , C_{c''})$$
where $C_j=h''_j(0) \in \U_{l_j+1}$ for $j \in [c'']$. If we substitute $x^1=\ldots=x^{c'}=x$ we get that
$$
f(x)=F(h'_1(x),\ldots,h'_{c'}(x),h''_1(x),\ldots,h''_{c''}(x)) =
F(h_1'(x),\ldots , h_{c'}'(x), C_1, \ldots , C_{c''}),
$$
which shows that $f$ is measurable with respect to $\calH'$, as claimed.

\section{Proof of Theorem~\ref{THM:main2}}

\restate{Theorem~\ref{THM:main2}}
{Let $\F=\F_p$ be a prime field. Let $\eps>0$ and $e \leq d,n \in \N$. Then, $$\ell_{\F}(d,n,\de(e)-\eps) \leq \exp\l(c_{p,d,\eps}n^{d-e}\r).$$}

The proof follows along the same lines as that of Theorem~\ref{THM:main}. It will rely on the following lemma which generalizes Claim~\ref{clm:sz1}.

\begin{lem}\label{lem:main}
Fix $d \ge e \ge 1, \eps>0$. There exists $r^{(\ref{lem:main})}_{d,\eps} \in \N$ such that the following holds. Let $f_1:\F^{n_1+n_2} \rightarrow \U_1$ be a classical polynomial of degree at most $d$. Assume that
\begin{itemize}
\item There exist $f_2:\F^{n_1}\rightarrow \U_1$ be such that $\Pr[f_1(x,y)=f_2(x)] \geq 1-\de(e)+\eps$.
\item There exists a polynomial $h:\F^{n_2} \to \U_{k+1}$ of degree at most $d$ such that the factor it defines has rank at least $r^{(\ref{lem:main})}_{d,\eps}$, and a function $\Gamma:\F^{n_1} \times \U_{k+1} \to \U_1$, such that
$$
f_1(x,y) = \Gamma(x,h(y)).
$$
\item The dependence on the depth of $h$ is nontrivial: $f_1(x,y)$ cannot be written as $\Gamma'(x,p \cdot h(y))$ for any $\Gamma':\F^{n_1} \times \U_{k} \to \U_1$.
\end{itemize}
Then $\deg(h) \leq d-e$.
\end{lem}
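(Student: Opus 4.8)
The plan is to generalize the argument of Claim~\ref{clm:sz1}, with the high‑rank polynomial $h$ playing the role of the single variable $z$ there: its near‑equidistribution over its range (Lemma~\ref{lem:atomsize}) replaces the bound ``a nonzero univariate polynomial of degree $i$ has at most $i$ roots,'' while the third hypothesis replaces the assumption that $f_1$ depends on the extra variable. I will take $r^{(\ref{lem:main})}_{d,\eps}$ large (collecting the constraints on it along the way; in particular large enough that the high rank of $h$ forces $n_2\ge d$) and argue by contradiction: assuming $\deg(h)=d_h\ge d-e+1$, I derive $\Pr_{x,y}[f_1(x,y)=f_2(x)]<1-\de(e)+\eps$.

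I first describe the argument when $h$ is classical (depth $0$), which already contains the idea. Identify $\U_1$ with $\F$, so $h:\F^{n_2}\to\F$ and $\Gamma:\F^{n_1}\times\F\to\F$. For each $t\in\F$ the map $x\mapsto\Gamma(x,t)$ equals the restriction of $f_1$ to $\{y=y_t\}$ for any $y_t$ with $h(y_t)=t$, hence is a classical polynomial of degree $\le d$; interpolating in $t$, write $\Gamma(x,t)=\sum_{i=0}^{p-1}c_i(x)\,t^i$, where each $c_i:\F^{n_1}\to\F$ is an $\F$‑linear combination of the $\Gamma(\cdot,t)$ and so has degree $\le d$. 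Let $i^*:=\max\{\, i : c_i\not\equiv 0\,\}$. By the third hypothesis $f_1$ is not a function of $x$ alone, so $i^*\ge 1$. Since $h$ has high rank one has $\deg(h^i)=i\,d_h$ for $i\le p-1$ (using $n_2\ge d$ and Lemma~\ref{lem:degree} applied to the factor $\{h\}$), so in $f_1(x,y)=\sum_i c_i(x)\,h(y)^i$ the term $i=i^*$ has strictly larger degree in the $y$‑variables than every term with $i<i^*$; no cancellation of top $y$‑monomials is possible, whence $\deg(f_1)\ge\deg(c_{i^*})+i^*d_h$, and therefore $\deg(c_{i^*})\le d-i^*d_h$ (in particular $d-i^*d_h\ge 0$).

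Now I estimate the agreement probability. By Lemma~\ref{lem:atomsize} applied to the rank‑$r^{(\ref{lem:main})}_{d,\eps}$ factor $\{h\}$, every $t\in\F$ satisfies $\Pr_y[h(y)=t]=1/p\pm\eps'$ with $\eps'$ as small as we wish. Fix $x$. If $c_{i^*}(x)=0$ I bound the inner probability by $1$; this occurs for at most a $1-\de(\deg c_{i^*})\le 1-\de(d-i^*d_h)$ fraction of $x$, by the Schwartz--Zippel lemma (equivalently the minimum distance of Reed--Muller codes), since $c_{i^*}\not\equiv 0$. If $c_{i^*}(x)\ne 0$ then $t\mapsto\sum_i c_i(x)t^i-f_2(x)$ is a nonzero univariate polynomial of degree $i^*\le p-1$, hence vanishes at $\le i^*$ points of $\F$, so $\Pr_y[f_1(x,y)=f_2(x)]\le i^*/p+p\eps'=(1-\de(i^*))+p\eps'$. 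With $q:=\Pr_x[c_{i^*}(x)=0]$,
$$
\Pr_{x,y}[f_1=f_2]\ \le\ q+(1-q)\bigl((1-\de(i^*))+p\eps'\bigr)\ =\ 1-(1-q)\de(i^*)+p\eps'\ \le\ 1-\de(d-i^*d_h)\,\de(i^*)+p\eps'.
$$
Applying the inequality $\de(c)\de(D-c)\ge\de(D)$ from the proof of Claim~\ref{clm:sz1} with $c=i^*\in\{1,\dots,p-1\}$ and $D=i^*+(d-i^*d_h)=d-i^*(d_h-1)$ gives $\de(d-i^*d_h)\de(i^*)\ge\de(d-i^*(d_h-1))$; and since $i^*\ge 1$ and $d_h-1\ge d-e$ we have $d-i^*(d_h-1)\le e$, so this is $\ge\de(e)$. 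Hence $\Pr_{x,y}[f_1=f_2]\le 1-\de(e)+p\eps'$, and choosing $r^{(\ref{lem:main})}_{d,\eps}$ large enough that $p\eps'<\eps$ contradicts the first hypothesis. Thus $d_h\le d-e$.

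For general $h$ of depth $k$ the argument is formally the same with $\F$ replaced by $\U_{k+1}$: one expands $\Gamma(x,\cdot):\U_{k+1}\to\U_1$ in the triangular basis $\{\, t\mapsto\binom{p^{k+1}t}{i} : 0\le i<p^{k+1}\,\}$ of $\U_1$‑valued functions on $\U_{k+1}$, the coefficients $c_i$ are again $\F$‑linear combinations of the polynomials $x\mapsto\Gamma(x,t)$ and so have degree $\le d$, the value $h(y)$ equidistributes over all of $\U_{k+1}$ by Lemma~\ref{lem:atomsize}, and the third hypothesis guarantees that some $c_i$ with $i\ge p^k$ is nonzero, i.e.\ that $\Gamma$ genuinely depends on the finest base‑$p$ digit of $h(y)$ — exactly the information in $h(y)$ not contained in $p\,h(y)$. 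The place where I expect the real work, and the main obstacle, is the degree bookkeeping in this case: one must determine the degrees (as polynomials in $y$) of the functions $y\mapsto\binom{p^{k+1}h(y)}{i}$, using Lemma~\ref{lem:poly} for the digit structure of $h$ and Lemma~\ref{lem:degree} together with high rank (again forcing $n_2$ large) to rule out cancellation of the leading term, and then identify the correct analogue of $i^*$ so that the multiplicativity inequality for $\de$ closes the argument exactly as in the classical case. Granting this, the contradiction with the first hypothesis goes through as before, yielding $\deg(h)\le d-e$.
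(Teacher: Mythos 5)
Your argument for the depth-zero (classical $h$) case is essentially correct, and it matches the paper's overall strategy: expand $\Gamma(x,\cdot)$ in powers of $h$, use high rank plus Lemma~\ref{lem:degree} to certify that the top power of $h$ has full degree so that $\deg(c_{i^*})\le d-i^*\deg(h)$, use Lemma~\ref{lem:atomsize} for near-equidistribution of $h(y)$, and close with the Schwartz--Zippel-type estimate and the inequality $\de(c)\de(D-c)\ge\de(D)$ from Claim~\ref{clm:sz1}. (Minor points you elide, but which are fine: surjectivity of $h$ onto $\F$, needed so that each $\Gamma(\cdot,t)$ is a restriction of $f_1$, follows from the same equidistribution; and $i^*\ge 1$ is indeed exactly the third hypothesis when $k=0$, since $\U_0$ is trivial.)

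However, for depth $k\ge 1$ you explicitly defer ("granting this\dots") the degree bookkeeping for the digit functions $y\mapsto\binom{p^{k+1}h(y)}{i}$, and that deferred step is precisely the technical heart of the paper's proof, not a routine verification. The paper does not attempt to control these degrees for a general high-rank $h$ directly. Instead it replaces $h$ by the explicit model polynomial $\tilde h(z)=\bigl(\sum_{i=1}^r\prod_{j=1}^A z_{i,j}\bigr)/p^{k+1}$ of the same degree and depth (with $A=\deg(h)-(p-1)k$), transfers the approximation hypothesis via equidistribution (Claim~\ref{clm:h_htilde_uniform}, Corollary~\ref{cor:f2_approx_f1tilde}) and the degree bound via a further regularization plus Lemma~\ref{lem:degree} (Claim~\ref{clm:degreethm2}), and then proves the key combinatorial fact, Lemma~\ref{lem:deg_coef}: in the digit expansion $\tilde f_1=\sum f_{d_0,\dots,d_k}(x)\prod_i W_i(z)^{d_i}$ one has $\deg(f_{d_0,\dots,d_k})\le d-A\sum_i p^i d_i$. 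Its proof needs the explicit symmetric-polynomial form of the digits (Lucas' theorem, Claim~\ref{clm:lucas}) and a multilinearization argument (Claim~\ref{clm:multilinear}) exhibiting a specific surviving monomial with coefficient $\prod_i d_i!\not\equiv 0\pmod p$, to rule out cancellations among the different digit-monomials $\prod_i W_i^{d_i}$ --- something that does not follow from high rank and Lemma~\ref{lem:degree} alone. The quantitative payoff is that dependence on the top digit costs degree $Ap^k$, which can greatly exceed $\deg(h)=A+(p-1)k$; the final Schwartz--Zippel step then yields $Ap^k\le d-e$ and hence $\deg(h)\le Ap^k\le d-e$. Your sketch identifies the right basis and the right analogue of $i^*$, but without an argument pinning down the degrees of the digit compositions and excluding cancellation (or a transfer to a model polynomial where this can be computed), the nonclassical case --- the case the lemma is actually needed for --- remains unproven.
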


We first prove Theorem~\ref{THM:main2} assuming Lemma~\ref{lem:main}.

\begin{proof}[Proof of Theorem~\ref{THM:main2} assuming Lemma~\ref{lem:main}]
The initial part of the proof is as in Theorem~\ref{THM:main}. Assume that $n>r^{(\ref{lem:main})}_{d,\eps/4}$ otherwise the theorem is trivially true. Let $f,g:\F^n \rightarrow \U_1$ with $\deg(f) \leq d$ and $\dist(f,g) \leq \de(e)-\eps$. For non decreasing functions $r_1, r_2:\N \rightarrow \N$, chosen as in the proof of Theorem~\ref{THM:main}, we have an $r_1$-regular $\calH'=\{h_1',\ldots , h_{c'}'\}$ and an $r_2$-regular $\calH''=\calH' \cup \{h''_1,\ldots , h''_{c''}\}$ where each $h_i',h''_i$ is a nonclassical polynomial of degree $\leq d$, such that the following holds.

Let $\depth(h_i')=k_i$ for $i \in [c']$ and $\depth(h''_j)=l_j$ for $j \in [c'']$. Since $f$ is measurable with respect to $\calH''$, there exists $F:\otimes_{i=1}^{c'}\U_{k_i+1} \times \otimes_{j=1}^{c''} \U_{l_j+1}  \rightarrow \U_1$ such that
$$
f(x)=F(h_1'(x),\ldots , h_{c'}'(x), h''_1(x),\ldots , h''_{c''}(x)).
$$
We may assume that for all $i \in [c'']$, the depth of $h''_i$ is minimal, in the sense that we cannot replace $h''_i$ with $p \cdot h''_i$ and change $F$ accordingly to still compute $f$ (if this is not the case, then replace $h''_i$ with $p \cdot h''_i$ whenever possible; this only reduces the degree of $h''_i$ and the new factor has rank at least that of the original factor). Also, there exists a function $G_f:\otimes_{i=1}^{c'}\U_{k_i+1}  \rightarrow \U_1$ such that
$$
\Pr[G_f(h'_1(x),\ldots,h'_{c'}(x))=f(x)] \ge 1-\de(e)+\eps/2.
$$
We will show that this implies that $\deg(h''_i) \le d-e$ for all $i \in [c'']$. Let $\B''$ be the factor defined by $\calH''$. As the number of polynomials of degree $d-e$ is exponential in $n^{d-e}$, the number of functions $f$ is controlled by the product of the number of composing functions $F$, which is $p^{||\B''||}=p^{\l(\prod_{i \in [c']}p^{k_i+1}\r)\l(\prod_{j \in [c'']}p^{l_j+1}\r)}=c_1(p,d,\eps)$, and the number of choices for $h''_1,\ldots,h''_{c''}$, which is $\exp(c_2 c'' n^{d-e})$. This amounts to at most $\exp(c n^{d-e})$ for some $c=c(p,d,\eps)$, as claimed.

To prove the bound on the degrees of $h''_1,\ldots,h''_{c''}$, define, as in the proof of Theorem~\ref{THM:main}, $x^i, y^j$ for $i \in [c'], j \in [c'']$ to be pairwise disjoint sets of $n \in \N$ variables. Let $n' = n(c'+c'')$. Define $\tilde{f}:\F^{n'} \rightarrow \U_1$ and $\tilde{g}:\F^{n'} \rightarrow \U_1$ as $$\tilde{f}(x^1,\ldots,x^{c'},y^1,\ldots,y^{c''})=F(h_1'(x^1),\ldots , h_{c'}'(x^{c'}), h''_1(y^1), \ldots , h''_{c''}(y^{c''}))$$ and $$\tilde{g}(x^1,\ldots,x^{c'})=G_f(h_1'(x^1), \ldots , h_{c'}(x^{c'})).$$
Then, by Claim~\ref{clm:tild}, $\deg(\tilde{f}) \leq d$ and $\Pr[\tilde{f}=\tilde{g}] \ge 1-\de(e)+\eps/4$.

We next apply Lemma~\ref{lem:main} to show that $\deg(h''_j) \le d-e$ for all $j \in [c'']$. To see that for say, $j=c''$, let $k=\depth(h''_{c''})$, $n_1=n(c'+c''-1), n_2=n, h(y)=h''_{c''}(y)$ and $\Gamma:\F^{n_1} \times \U_{k+1} \to \U_1$ given by
$$
\Gamma((x^1,\ldots,x^{c'},y^1,\ldots,y^{c''-1}),\alpha) = F(h_1'(x^1), \ldots , h'_{c'}(x^{c'}),h''_1(y^1),\ldots,h''_{c''-1}(y^{c''-1}), \alpha).
$$
so that
$$
\tilde{f}(x^1,\ldots,x^{c'},y^1,\ldots,y^{c''}) = \Gamma((x^1,\ldots,x^{c'},y^1,\ldots,y^{c''-1}),h''_{c''}(y^{c''})).
$$
If we make sure that $r_2(m) \ge r^{(\ref{lem:main})}_{d,\eps/4}$ for all $m \ge 1$, then we establish all the requirements for Lemma~\ref{lem:main}. Hence we deduce that $\deg(h''_{c''}) \le d-e$ as claimed.
\end{proof}

\subsection{Proof of Lemma~\ref{lem:main}}

We prove Lemma~\ref{lem:main} in this section. Fix $d \ge e \ge 1$ and $\eps>0$. Let $r=r^{(\ref{lem:main})}_{d,\eps}$ be large enough to be chosen later. We first show that we can replace $h$ with a simple polynomial of the same degree and depth, which would allow us to simplify the analysis.

Let $\depth(h)=k$ and let $A=\deg(h)-(p-1)k$. Define $\tilde{h}:\F^{rA}\rightarrow \U_{k+1}$ as follows. Let $z=(z_{1,1}, \ldots , z_{r,A}) \in \F^{rA}$ and define
\begin{equation}
\tilde{h}(z):=\frac{\sum_{i=1}^r \prod_{j=1}^A  z_{i,j}}{p^{k+1}}.
\end{equation}
Note that $\tilde{h}$ and $h$ are both polynomials of the same degree and depth. Define $\tilde{f}_1:\F^{n_1+rA}\rightarrow \U_1$ as
$$
\tilde{f}_1(x,z)=\Gamma(x,\tilde{h}(z)).
$$

We will show that we may analyze $\tilde{f}_1$ instead of $f_1$ to obtain the upper bound on $\deg(h)$. To simplify the presentation,
denote $Z_i := \prod_{j=1}^A z_{i,j}$ for $i \in [r]$. First, we argue that if $r$ is chosen large enough then both $h,\tilde{h}$ are nearly uniform over $\U_{k+1}$.

\begin{clm}\label{clm:h_htilde_uniform}
If $r$ is chosen large enough then for all $\alpha \in \U_{k+1}$,
$$
\Pr_{y \in \F^{n_2}}[h(y)=\alpha] = p^{-(k+1)} (1 \pm \eps/2)
$$
and
$$
\Pr_{z \in \F^{rA}}[\tilde{h}(z)=\alpha] = p^{-(k+1)} (1 \pm \eps/2).
$$
\end{clm}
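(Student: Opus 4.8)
The plan is to reduce both statements to Lemma~\ref{lem:atomsize}, which gives near-uniformity of atoms for a polynomial factor of sufficiently high rank, applied to the rank-$1$ factor consisting of a single polynomial ($h$ or $\tilde h$ respectively). For the first estimate, this is immediate: the hypothesis of Lemma~\ref{lem:main} already assumes that the factor defined by $h$ has rank at least $r = r^{(\ref{lem:main})}_{d,\eps}$, so if we impose the (routine) requirement that $r \ge r^{(\ref{lem:atomsize})}_d(\eps/2)$ — using that $h$ has degree at most $d$ and depth $k$, so $||\B_{\{h\}}|| = p^{k+1}$ — then Lemma~\ref{lem:atomsize} gives $\Pr_y[h(y)=\alpha] = p^{-(k+1)} \pm \eps/2$ for every $\alpha \in \U_{k+1}$, which is the claimed bound (the multiplicative form $p^{-(k+1)}(1\pm\eps/2)$ follows by absorbing constants, or simply rescaling the $\eps/2$ in the application of Lemma~\ref{lem:atomsize} by a factor $p^{-(k+1)} \ge p^{-(\lfloor (d-1)/(p-1)\rfloor+1)}$, which depends only on $p,d$).

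For the second estimate the only thing to check is that the single polynomial $\tilde h(z) = p^{-(k+1)}\sum_{i=1}^r \prod_{j=1}^A z_{i,j}$ has rank at least $r^{(\ref{lem:atomsize})}_d(\eps/2)$ as a degree-$d$ polynomial (note $\deg(\tilde h) = A + (p-1)k = \deg(h) \le d$ and $\depth(\tilde h)=k$, as already observed in the text). The rank of the one-element factor $\{\tilde h\}$ is, by definition, the smallest $s$ such that some nonzero multiple $a\tilde h$ (with $a \not\equiv 0 \bmod p^{k+1}$) satisfies $\rank_{d'}(a\tilde h)\le s$ where $d'=\deg(a\tilde h)$. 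Writing $a = p^t a'$ with $\gcd(a',p)=1$ and $t \le k$, we have by Lemma~\ref{lem:poly} that $a\tilde h$ is (up to a unit) $p^t \tilde h$, a polynomial of degree $\max(A-t(p-1)+(p-1)k,\,0)$ — concretely proportional to $p^{-(k+1-t)}\sum_{i=1}^r \prod_j z_{i,j}$. So it suffices to lower bound the rank of $\sum_{i=1}^r \prod_{j=1}^A z_{i,j}$ over all relevant degrees. This is a standard fact: a sum of $r$ monomials on disjoint variable blocks cannot be computed by $s$ lower-degree polynomials unless $s \ge r$ (roughly, one can restrict all but one block to generic values and still see a genuine degree-$A$ monomial, forcing many lower-degree pieces); more carefully, one invokes the known bias/rank relationship for such "generic" polynomials. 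Thus $\rank(\{\tilde h\}) \ge $ some increasing function of $r$, and choosing $r$ large enough — larger than $r^{(\ref{lem:atomsize})}_d(\eps/2)$ after accounting for the $p^t$ degree-reductions, which only involves $p,d$ many cases — lets us apply Lemma~\ref{lem:atomsize} to conclude $\Pr_z[\tilde h(z)=\alpha] = p^{-(k+1)}(1\pm\eps/2)$.

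The main obstacle is the rank lower bound for $\tilde h$: one must verify that the explicit polynomial $\sum_{i=1}^r\prod_{j=1}^A z_{i,j}$, and all its depth-reduced multiples $p^t\tilde h$, have rank growing with $r$ in the appropriate degree. This is the reason the construction uses $r$ disjoint length-$A$ products summed together rather than a single monomial — it is precisely the gadget that makes the rank as large as we wish while keeping degree and depth pinned to those of $h$. Everything else is bookkeeping: matching $||\B||$, translating additive error into multiplicative error, and ensuring the single parameter $r = r^{(\ref{lem:main})}_{d,\eps}$ is taken larger than the finitely many thresholds $r^{(\ref{lem:atomsize})}_d(\cdot)$ that arise.
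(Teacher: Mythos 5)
Your treatment of $h$ is exactly the paper's: invoke Lemma~\ref{lem:atomsize} with the error parameter rescaled by $p^{-(k+1)}$ (the paper takes $r \ge r_d^{(\ref{lem:atomsize})}(\eps/2p^{k+1})$), and that half is fine. The gap is in the second half. You reduce the statement for $\tilde h$ to a rank lower bound for $\sum_{i=1}^r\prod_{j=1}^A z_{i,j}$ (and its multiples $p^t\tilde h$) growing with $r$, and you call this ``a standard fact,'' but neither of your suggested justifications works, and nothing in the paper's toolkit supplies it. The restriction sketch fails: fixing all but one block reduces $\tilde h$ to a single monomial $z_{1,1}\cdots z_{1,A}$ plus a constant, and a single degree-$A$ monomial has rank at most $A$ (it is a function of its $A$ coordinates), so no lower bound on the number of lower-degree pieces follows. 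The appeal to the ``bias/rank relationship'' goes in the wrong direction: the known theorems (and Lemma~\ref{lem:atomsize} itself) say that \emph{high rank implies near-uniformity/low bias}; they do not let you infer high rank from low bias. Indeed low bias does not imply high rank: $f(x)=x_1x_2+x_3$ has bias exactly $0$ yet rank at most $3$. So certifying that this explicit polynomial has rank exceeding $r_d^{(\ref{lem:atomsize})}(\cdot)$ is a genuinely nontrivial inapproximability statement that your proposal leaves unproved, and the paper has no lemma that produces rank lower bounds for explicit polynomials.

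The paper avoids rank for $\tilde h$ altogether and exploits the product structure directly: writing $Z_i=\prod_j z_{i,j}$, the $Z_i$ are i.i.d.\ with $\Pr[Z_i=0],\Pr[Z_i=1]\ge p^{-A}\ge p^{-d}$, hence for every nonzero $c \in \Z_{p^{k+1}}$ one has $|\E[\omega^{cZ_i}]|\le 1-\eta$ with $\eta=p^{-O(d)}$ and $\omega=e^{2\pi i/p^{k+1}}$; by independence $|\E[\omega^{c(Z_1+\cdots+Z_r)}]|\le(1-\eta)^r$, and taking $r$ with $(1-\eta)^r<(\eps/2)p^{-(k+1)}$, Fourier inversion over $\Z_{p^{k+1}}$ gives $\Pr[\tilde h(z)=\alpha]=p^{-(k+1)}(1\pm\eps/2)$. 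If you want to keep your route, you would have to actually prove the rank lower bound for $\sum_i\prod_j z_{i,j}$ and all its depth-reduced multiples, which is substantially harder than the two-line character estimate and is precisely the step your writeup defers.
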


\begin{proof}
The proof for $h$ follows from Lemma~\ref{lem:atomsize} by choosing
$r \geq r_d^{\ref{lem:atomsize}}\l(\frac{\eps}{2p^{k+1}}\r)$. The proof for $\tilde{h}$ follows by a simple Fourier calculation. Let $\omega=\exp(2 \pi i / p^{k+1})$. We have $\Pr[Z_i=0],\Pr[Z_i=1] \ge p^{-A}\ge p^{-d}$.   One can verify that this implies that for any nonzero $c \in \Z_{p^{k+1}}$, $\E[\omega^{c Z_i}] \le 1-\eta$ for $\eta = p^{-O(d)}$. As $Z_1,\ldots,Z_r$ are independent we have
$\E\l[\omega^{c(Z_1+\ldots+Z_r)}\r] \le (1-\eta)^r$. Hence if we choose $r$ large enough so that $(1-\eta)^r < (\eps/2) p^{-(k+1)}$ then, for any $a \in \Z_{p^{k+1}}$,
\begin{align*}
\Pr[Z_1+\ldots+Z_r=a \pmod{p^{k+1}}] &= p^{-(k+1)}\l(1 + \sum_{c \in \Z_{p^{k+1}} \setminus \{0\}} \omega^{-ac} \cdot \E\l[\omega^{c(Z_1+\ldots+Z_r)}\r]\r) \\&= p^{-(k+1)} (1 \pm \eps/2).
\end{align*}
\end{proof}

This implies that $f_2(x)$ is also well approximates $\tilde{f}_1(x,z)$.

\begin{cor}\label{cor:f2_approx_f1tilde}
$\Pr[\tilde{f}_1(x,z)=f_2(x)] \ge \Pr[f_1(x,y) = f_2(x)] - \eps/2 \ge 1-\de(e)+\eps/2$ where $x \in \F^{n_1}, y \in \F^{n_2},  z \in \F^{rA}$ are chosen uniformly and independently.
\end{cor}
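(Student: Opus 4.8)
The plan is to bound the two success probabilities $\Pr[\tilde f_1(x,z)=f_2(x)]$ and $\Pr[f_1(x,y)=f_2(x)]$ against each other, losing only $\eps/2$, and then to chain the resulting inequality with the first hypothesis of Lemma~\ref{lem:main}, namely $\Pr[f_1(x,y)=f_2(x)] \ge 1-\de(e)+\eps$. The point is that $f_1(x,y)=\Gamma(x,h(y))$ and $\tilde f_1(x,z)=\Gamma(x,\tilde h(z))$ are obtained by feeding into the \emph{same} function $\Gamma$ a value that depends only on $y$ (respectively only on $z$), hence is independent of $x$.

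First I would fix $x \in \F^{n_1}$ and write
\[
\Pr_{y}[f_1(x,y)=f_2(x)] = \sum_{\alpha \in \U_{k+1}} \Pr_{y}[h(y)=\alpha]\cdot 1_{\Gamma(x,\alpha)=f_2(x)},
\]
together with the analogous identity for $\tilde f_1(x,z)$ with $\Pr_z[\tilde h(z)=\alpha]$ in place of $\Pr_y[h(y)=\alpha]$. Subtracting, the difference of these two conditional probabilities equals $\mu_{\tilde h}(S)-\mu_h(S)$ for the set $S=\{\alpha:\Gamma(x,\alpha)=f_2(x)\}$, where $\mu_h,\mu_{\tilde h}$ denote the laws of $h(y)$ and $\tilde h(z)$ on $\U_{k+1}$; in particular its absolute value is at most the total variation distance $\|\mu_h-\mu_{\tilde h}\|_{\mathrm{TV}}$.

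Next I would invoke Claim~\ref{clm:h_htilde_uniform}: choosing $r$ large enough, both $\mu_h$ and $\mu_{\tilde h}$ assign each $\alpha \in \U_{k+1}$ a mass $p^{-(k+1)}(1\pm\eps/2)$, so, using $|\U_{k+1}|=p^{k+1}$, each of $\mu_h,\mu_{\tilde h}$ is within $\tfrac12\cdot p^{k+1}\cdot p^{-(k+1)}\cdot(\eps/2)=\eps/4$ of the uniform distribution on $\U_{k+1}$ in total variation. By the triangle inequality $\|\mu_h-\mu_{\tilde h}\|_{\mathrm{TV}}\le \eps/2$, hence $|\Pr_z[\tilde f_1(x,z)=f_2(x)]-\Pr_y[f_1(x,y)=f_2(x)]|\le \eps/2$ for every fixed $x$. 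Averaging over $x$ and combining with the hypothesis then yields $\Pr[\tilde f_1(x,z)=f_2(x)]\ge \Pr[f_1(x,y)=f_2(x)]-\eps/2\ge 1-\de(e)+\eps/2$.

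There is essentially no obstacle here; the only subtlety is to carry out the comparison conditionally on $x$, so that replacing $h(y)$ by $\tilde h(z)$ inside $\Gamma(x,\cdot)$ is justified purely by the closeness of the two ($x$-independent) marginal distributions. A minor bookkeeping point is that the $\eps/2$ uniformity budget in Claim~\ref{clm:h_htilde_uniform} is exactly what is needed, since each of $\mu_h$ and $\mu_{\tilde h}$ contributes $\eps/4$ to the triangle-inequality bound.
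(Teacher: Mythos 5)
Your proof is correct and follows essentially the same route as the paper: bound the statistical distance between $h(y)$ and $\tilde h(z)$ via Claim~\ref{clm:h_htilde_uniform} (each is $\eps/4$-close to uniform on $\U_{k+1}$, hence $\eps/2$-close to each other), then note that for each fixed $x$ the two probabilities $\Pr[\Gamma(x,h(y))=f_2(x)]$ and $\Pr[\Gamma(x,\tilde h(z))=f_2(x)]$ differ by at most this distance, and average over $x$. The paper states this more tersely, but your conditional-on-$x$ bookkeeping and the explicit total-variation computation are exactly the details it leaves implicit.
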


\begin{proof}
Claim~\ref{clm:h_htilde_uniform} implies that the statistical distance between $h(y)$ and $\tilde{h}(z)$ is at most $\eps/2$. Hence for every fixed $x$, $|\Pr[\Gamma(x,h(y))=f_2(x)] - \Pr[\Gamma(x,\tilde{h}(z))=f_2(x)]| \le \eps/2$.
\end{proof}

We next argue that by choosing $r$ large enough, we can guarantee that $\tilde{f}_1$ has degree at most $d$.

\begin{clm}\label{clm:degreethm2}If $r$ is chosen large enough then $\deg(\tilde{f}_1)\leq \deg(f_1) \leq d$.
\end{clm}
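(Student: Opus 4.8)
The plan is to derive the bound from Lemma~\ref{lem:degree} (preserving degree), applied only after a preliminary restriction of the $x$-variables. The naive attempt of writing $f_1(x,y)$ and $\tilde f_1(x,z)$ as $\Gamma$ applied to the factors $\{x_1,\dots,x_{n_1},h\}$ and $\{x_1,\dots,x_{n_1},\tilde h\}$ and invoking Lemma~\ref{lem:degree} directly does not work, because those factors contain $n_1+1$ polynomials, so the rank they would be required to have grows with $n_1$, while $r=r^{(\ref{lem:main})}_{d,\eps}$ is allowed to depend only on $d$ and $\eps$. Instead I would first restrict $x$ to a generic affine subspace of dimension $m+1$, where $m:=\deg(f_1)\le d$.

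By Definition~\ref{def:poly} it suffices to prove that $(D_{a_1}\cdots D_{a_{m+1}}\tilde f_1)(x_0,z_0)=0$ for every $(x_0,z_0)\in\F^{n_1}\times\F^{rA}$ and all $a_1,\dots,a_{m+1}\in\F^{n_1+rA}$. Write $a_i=(b_i,c_i)$ with $b_i\in\F^{n_1}$ and $c_i\in\F^{rA}$, introduce fresh variables $s=(s_1,\dots,s_{m+1})$, and let $\phi(s):=x_0+\sum_i s_i b_i$ be the corresponding affine map $\F^{m+1}\to\F^{n_1}$. After passing to a common variable space $\F^{m+1}\times\F^{N}$ with $N:=\max(n_2,rA)$ (padding $h$ and $\tilde h$ with dummy variables so both are defined over it, which does not change their degrees, depths or ranks), I would consider the polynomial factor $\B$ generated by the $m+2$ polynomials $s_1,\dots,s_{m+1}$ (degree-$1$ classical coordinate functions) together with $h$.

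The crux is checking that $\B$ is regular of rank at least $r$. For a nontrivial linear combination of these $m+2$ polynomials: if some $s_i$ appears with nonzero coefficient then the combination is either non-constant of degree $1$ (hence of rank $\infty$) or, after specializing $s=0$, a polynomial whose $\rank$ is at least the rank of the factor defined by $h$ alone, which is $\ge r$ by hypothesis; otherwise the combination is a nonzero multiple of $h$ and again has rank $\ge r$. Thus, provided $r\ge r^{(\ref{lem:degree})}_d(m+2)$ — which, since $m\le d$, is ensured by a threshold depending only on $d$ — Lemma~\ref{lem:degree} applies to $\B$. Now $F(s,y):=f_1(\phi(s),y)=\Gamma(\phi(s),h(y))$ equals $\Gamma'(s_1,\dots,s_{m+1},h(y))$ for the evident function $\Gamma':\U_1^{m+1}\times\U_{k+1}\to\U_1$, and since $f_1$ is a classical polynomial of degree $m$ and $\phi$ is affine, $\deg(F)\le m$. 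Replacing $h$ by $\tilde h$ in the factor is legitimate because $\tilde h$ has the same degree and depth as $h$, so Lemma~\ref{lem:degree} gives that $G(s,z):=\Gamma'(s_1,\dots,s_{m+1},\tilde h(z))=\tilde f_1(\phi(s),z)$ satisfies $\deg(G)\le\deg(F)\le m$.

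To finish, let $\mathbf e_i$ be the $i$-th unit vector of $\F^{m+1}$ and set $\hat a_i:=(\mathbf e_i,c_i)$. Since $\phi(s+\mathbf e_i)=\phi(s)+b_i$, a routine induction on discrete derivatives gives $D_{\hat a_1}\cdots D_{\hat a_{m+1}}G(s,z)=(D_{a_1}\cdots D_{a_{m+1}}\tilde f_1)(\phi(s),z)$, and this vanishes identically because $\deg(G)\le m$; evaluating at $(s,z)=(0,z_0)$ yields $(D_{a_1}\cdots D_{a_{m+1}}\tilde f_1)(x_0,z_0)=0$. As $x_0$, $z_0$ and the $a_i$ were arbitrary, $\deg(\tilde f_1)\le m=\deg(f_1)\le d$. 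I expect the only genuinely delicate step to be the rank computation for $\B$: it works exactly because restricting $x$ to $m+1$ fresh coordinates (rather than all $n_1$ of them) keeps $|\B|\le d+2$, so the rank required by Lemma~\ref{lem:degree} remains a constant depending only on $d$; the remaining points — the degree bound for $F$, the degree/depth matching between $h$ and $\tilde h$, and the derivative bookkeeping — are routine.
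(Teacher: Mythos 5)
Your proof is correct, but it tames the unbounded $x$-dependence of $\Gamma$ by a genuinely different device than the paper. Both arguments ultimately invoke Lemma~\ref{lem:degree} with the substitution $h \mapsto \tilde h$ (legitimate since $\deg$ and $\depth$ match), but they differ in how they get down to a factor of bounded size: the paper uses the fact that for $r$ large $h$ is surjective onto $\U_{k+1}$ (Claim~\ref{clm:h_htilde_uniform}), writes the $x$-dependence through the $p^{k+1}$ classical polynomials $f_\alpha(x)=\Gamma(x,\alpha)=f_1(x,y_\alpha)$ of degree at most $d$, regularizes this family via Lemma~\ref{lem:reg} into $g_1,\ldots,g_c$ with $c\le C_{r_1,d}^{(\ref{lem:reg})}(p^{k+1})$, and applies Lemma~\ref{lem:degree} to the factor $\{g_1,\ldots,g_c,h\}$; you instead reduce the degree bound to the vanishing of all $(m+1)$-fold derivatives, observe that each such derivative only sees $x$ along an $(m+1)$-dimensional affine slice $\phi(s)$, and apply Lemma~\ref{lem:degree} to the factor $\{s_1,\ldots,s_{m+1},h\}$ of size at most $d+2$, whose rank you lower-bound by specializing $s=0$ (and noting that nonconstant degree-$1$ combinations have infinite rank). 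Your route buys something concrete: it needs no second application of the regularity lemma and no surjectivity of $h$, and the resulting threshold on $r$ is simply $r^{(\ref{lem:degree})}_d(d+2)$ rather than $r^{(\ref{lem:degree})}_d\bigl(C_{r_1,d}^{(\ref{lem:reg})}(p^{k+1})+1\bigr)$; the cost is the extra bookkeeping with derivatives and the affine restriction, which you handle correctly ($\phi(s+\mathbf e_i)=\phi(s)+b_i$ does give the stated identity between derivatives of $G$ and of $\tilde f_1$). One small presentational caveat: in the rank verification the clean case split is on whether the coefficient of $h$ is $0 \bmod p^{k+1}$ and on the degree of $a\cdot h$ (if that degree is at least $2$ it dominates the degree-$1$ part $L(s)$, and specializing $s=0$ transfers any low-degree decomposition of $L+ah$ to one of $ah$; if it is at most $1$ the combination is a nonconstant linear polynomial of infinite rank); your phrasing blurs these subcases slightly, but the argument underneath is sound.
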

\begin{proof} By Claim~\ref{clm:h_htilde_uniform}, if $r$ is chosen large enough then $h(y),\tilde{h}(z)$ attain all possible values in $\U_{k+1}$. For every $\alpha  \in \U_{k+1}$, let $f_{\alpha}(x):=\Gamma(x,\alpha)$. Note that as there exists some $y_{\alpha} \in h^{-1}(\alpha)$ then $f_{\alpha}(x)=f_1(x,y_{\alpha})$ is a (classical) polynomial in $x$ of degree at most $d$.

We have $f_1(x,y)=\Gamma(x,h(y))=\Gamma'((f_{\alpha}(x):\alpha \in \U_{k+1}),h(y))$ for some $\Gamma':\F^{p^{k+1}} \times \U_{k+1} \to \F$. Let $\calH=\{f_{\alpha}(x):\alpha \in \U_{k+1}\}$ and for $r_1:\N \rightarrow \N$ a growth function to be specified later, let $\calH'=\{g_1(x),\ldots,g_c(x)\}$ be the result of $r_1$-regularizing $\calH$ by Lemma~\ref{lem:reg}. Then
$$
f_1(x,y) = \Gamma''(g_1(x),\ldots,g_c(x),h(y))
$$
for some $\Gamma'':\F^c \times \U_{k+1} \to \F$. Hence also
$$
\tilde{f}_1(x,z) = \Gamma(x,\tilde{h}(z)) = \Gamma''(g_1(x),\ldots,g_c(x),\tilde{h}(z)).
$$
We next apply Lemma~\ref{lem:degree} to bound the degree of $\tilde{f}_1$. This requires to assume that $r_1(c) \ge r_d^{(\ref{lem:degree})}(c+1)$ and $r \ge r_d^{(\ref{lem:degree})}(C_{r_1,d}^{(\ref{lem:reg})}(p^{k+1})+1)$. We obtain that
$$
\deg(\tilde{f}_1) = \deg(\Gamma''(g_1(x),\ldots,g_{c}(x),\tilde{h}(z))) \le\deg(\Gamma''(g_1(x),\ldots,g_{c}(x),h(y)))=\deg(f_1)=d.
$$
\end{proof}

We next analyze the specific properties of $\tilde{h}$. Recall that we set
$Z_i:=\prod_{j=1}^A z_{i,j}$ so that $\tilde{h}(z) = \frac{\sum Z_i}{p^{k+1}}$. Since $\tilde{h}$ depends only on $W=\sum Z_i \mod p^{k+1}$, let the digits of $W \mod p^{k+1}$ in base $p$, be represented by classical polynomials $W_0(z),\ldots , W_k(z):\F^{rA} \to \F$. Then, we can express $\tilde{f_1}(x,z)$ as
\begin{equation}
\tilde{f_1}(x,z) = \Gamma(x, \tilde{h}(z)) = \Gamma'(x,W_0(z),W_1(z),\ldots,W_{k}(z))
\end{equation}
for some $\Gamma':\F^{n_1} \times \F^{k+1} \to \U_1$. Recall that we assumed that $\Gamma$ depends nontrivially on the depth of
its second argument. This implies that $\Gamma'$ depends nontrivially on its last
input (i.e. $W_k(z))$.  As $\tilde{f_1}$ is a classical polynomial, and each $W_{i}$ take values in $\F$,
identifying $\U_1$ with $\F$, we can decompose
\begin{equation}\label{eq:symexpansion}
\tilde{f_1}(x,z) = \sum_{0 \le d_0,\ldots,d_k \le p-1} f_{d_0,\ldots,d_k}(x) \prod_{i=0}^k W_{i}(z)^{d_i},
\end{equation}
where $f_{d_0,\ldots,d_k} \in \F[x]$ is a classical polynomial. We next argue that $\deg(f_{d_0,\ldots,d_k})$ cannot be too large.

\begin{lem}\label{lem:deg_coef}
$\deg(f_{d_0,\ldots,d_k}) \le d - A \sum_{i=0}^k p^i d_i$ for all $0 \le d_0,\ldots,d_k \le p-1$.
\end{lem}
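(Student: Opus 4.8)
The plan is to analyze the structure of $\tilde h(z)=\frac{\sum_{i=1}^r Z_i}{p^{k+1}}$ where $Z_i=\prod_{j=1}^A z_{i,j}$, and in particular to understand the degrees of the base-$p$ digit polynomials $W_0(z),\dots,W_k(z)$ of $W=\sum Z_i \bmod p^{k+1}$. The key numerical fact I would establish first is that, since $\tilde h$ is a polynomial of degree $\deg(h) = A + (p-1)k$ and depth $k$, the digit $W_i$ is a classical polynomial of degree exactly $A p^i$. This is the standard ``carry'' calculation in higher-order Fourier analysis: each $Z_i$ is a monomial of degree $A$, and the $i$-th base-$p$ digit of a sum of such terms is a symmetric function whose degree is multiplied by $p^i$ (cf.\ the characterization in Lemma~\ref{lem:poly} of the degree/depth contribution $\frac{c \prod |x_j|^{d_j}}{p^{k+1}}$, which forces $\sum d_j \le d - k(p-1)$, i.e.\ $\sum d_j = A$ at the top depth). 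I would prove $\deg(W_i) \le A p^i$ by induction on $i$, writing $W_i$ in terms of $W_0,\dots,W_{i-1}$ and the symmetric polynomials in the $Z_\ell$'s via the usual carry recursion, and noting each $Z_\ell$ has degree $A$.

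Given that, the main step is a degree bound on the expansion \eqref{eq:symexpansion}. From Claim~\ref{clm:degreethm2} we know $\deg(\tilde f_1) \le d$. Now I would substitute, for each $\ell$, the variable block $z_{\ell,1},\dots,z_{\ell,A}$ in a way that isolates the monomial $\prod_{i=0}^k W_i(z)^{d_i}$ inside \eqref{eq:symexpansion}. Concretely: the top-degree part of $\prod_{i=0}^k W_i(z)^{d_i}$, as a polynomial in $z$, has degree $\sum_{i=0}^k d_i \cdot A p^i = A\sum_i p^i d_i$, and — this is the crux — no other term $f_{d_0',\dots,d_k'}(x)\prod_i W_i(z)^{d_i'}$ can contribute to the same $z$-monomial as the leading $z$-monomial of $f_{d_0,\dots,d_k}(x)\prod_i W_i(z)^{d_i}$, provided one chooses the monomials carefully (the digit polynomials $W_i$ have ``disjoint enough'' leading structures across the $r$ coordinate blocks, so that reading off the coefficient of a suitable multilinear $z$-monomial of degree exactly $A\sum_i p^i d_i$ picks out $f_{d_0,\dots,d_k}(x)$ times a nonzero scalar, with no cross-terms). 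Then the coefficient extraction from $\tilde f_1$, which has total degree $\le d$ in $(x,z)$, forces $\deg_x(f_{d_0,\dots,d_k}) + A\sum_{i=0}^k p^i d_i \le d$, which is exactly the claimed bound.

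For the coefficient-isolation argument I would take $r$ large (it is already chosen large for Claims~\ref{clm:h_htilde_uniform} and \ref{clm:degreethm2}, and I can enlarge it further) so there are enough independent blocks $Z_1,\dots,Z_r$ to realize the needed symmetric-polynomial monomials without collision. The cleanest way is: expand each $W_i$ as a polynomial in the elementary symmetric polynomials $e_1,\dots,e_r$ of $Z_1,\dots,Z_r$; the top-degree component of $W_i$ is (a nonzero multiple of) $e_1^{p^i}$ up to lower-order symmetric terms, hence the top-degree component of $\prod_i W_i^{d_i}$ is a nonzero multiple of $e_1^{\sum p^i d_i} = (\sum Z_\ell)^{\sum p^i d_i}$, whose expansion in the $z_{\ell,j}$ contains the multilinear monomial $\prod_{\ell=1}^{\sum p^i d_i}\prod_{j=1}^A z_{\ell,j}$ with a nonzero coefficient (a multinomial coefficient, nonzero since $\sum_i p^i d_i \le p^{k+1}-1 < p^{k+1}$, so it is not divisible by $p$ — here I use $\sum d_i \le (p-1)(k+1)$ hence $\sum p^i d_i \le p^{k+1}-1$). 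That monomial has degree exactly $A\sum_i p^i d_i$ in $z$ and does not appear in any $\prod_i W_i^{d_i'}$ with $(d_i')\neq(d_i)$ and $\sum p^i d_i' \ge \sum p^i d_i$ once $r$ is large enough that distinct digit-multisets use disjoint index ranges among the blocks; thus its coefficient in $\tilde f_1$ equals a nonzero scalar times $f_{d_0,\dots,d_k}(x)$. I expect the bookkeeping of ``which $z$-monomials are hit by which products of digit polynomials'' to be the main obstacle — making precise that one can choose a witness $z$-monomial that is hit \emph{only} by $f_{d_0,\dots,d_k}$ and by no competing term — but this is exactly the kind of leading-monomial/rank argument that higher-order Fourier analysis provides, and the large value of $r$ gives all the room needed.
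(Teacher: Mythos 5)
Your high-level plan (bound $\deg(\tilde f_1)\le d$, then extract a suitable leading monomial from the expansion \eqref{eq:symexpansion} to force $\deg f_{d_0,\ldots,d_k}+A\sum p^id_i\le d$) is the same strategy the paper follows, but the concrete mechanism you propose for the coefficient isolation breaks down at exactly the step you flag as the crux. First, the claim that the top-degree component of $W_i$ is a nonzero multiple of $e_1^{p^i}$ is false over $\F_p$: by Frobenius, $e_1^{p^i}=\bigl(\sum_\ell Z_\ell\bigr)^{p^i}=\sum_\ell Z_\ell^{p^i}$, which is a power sum, not the digit polynomial. (For $p=2$ the $2$'s digit of $Z_1+\cdots+Z_r$ on $\{0,1\}$-valued $Z$ is $S_2(Z)=\sum_{\ell<m}Z_\ell Z_m$, while $e_1^2=\sum_\ell Z_\ell^2$, which as a function on the cube is $W_0$, not $W_1$.) The correct structural fact, which the paper gets from Lucas' theorem (Claim~\ref{clm:lucas}), is that on $\{0,1\}^{rA}$ the digit $W_i$ agrees with the \emph{elementary symmetric} polynomial $S_{p^i}(Z)$, and one then passes to a common multilinearization $M^{(D)}=\ML\bigl(\prod_iW_i^{d_i}\bigr)=\ML\bigl(\prod_iS_{p^i}^{d_i}\bigr)$ (Claim~\ref{clm:multilinear}). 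Second, your nonvanishing claim is also wrong as stated: the coefficient of the multilinear monomial $\prod_{\ell=1}^{D}Z_\ell$ in $\bigl(\sum_\ell Z_\ell\bigr)^{D}$ is $D!$, which is $0\bmod p$ as soon as $D\ge p$, i.e.\ in every case where some $d_i\neq 0$ with $i\ge1$; the condition $D<p^{k+1}$ does not help. In the paper the relevant coefficient is instead the number of partitions of a $D$-set into $d_i$ blocks of size $p^i$ for each $i$, which by Lucas equals $\prod_i d_i!\not\equiv 0\bmod p$ --- this is why one must work with $\prod_i S_{p^i}^{d_i}$ rather than powers of $e_1$.

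There is a third gap: you try to choose a witness $z$-monomial that is hit by no competing term $(d_0',\ldots,d_k')$ with $D'=\sum p^id_i'\ge D$, hoping that large $r$ forces ``disjoint index ranges.'' Increasing $r$ does not achieve this: for $D'>D$ the product $\prod_iW_i^{d_i'}$ (or its multilinearization $M^{(D')}$, of degree $AD'>AD$) can perfectly well contain the same degree-$AD$ monomial $\prod_{\ell\le D}\prod_j z_{\ell,j}$, coming from monomials supported on the same blocks with higher multiplicities. The paper does not exclude these terms on the $z$-side at all; it picks $D$ \emph{maximal} violating the claimed bound and rules out $D'>D$ on the $x$-side (by maximality, $\deg f_{d_0',\ldots,d_k'}\le d-AD'<\deg m(x)$, so $m(x)$ cannot occur there), while $D'<D$ is ruled out on the $z$-side by degree. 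Without this two-sided maximality dichotomy, and with the $e_1^{p^i}$/multinomial-coefficient steps as written, the isolation argument does not go through, so the proposal has a genuine gap even though the overall approach is in the right spirit.
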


We will require a few simple claims first. The $\ell$-th symmetric polynomial in $Z=(Z_1,\ldots,Z_r)$, for $1 \le \ell \le r$, is a classical polynomial of degree $\ell$ defined as
$$
S_{\ell}(Z)=\sum_{1 \leq i_1 < \ldots < i_{\ell} \leq r}\prod_{j=1}^{\ell} Z_{i_j}.
$$

For $0 \le i \le k$, define $W_i':\F^{rA}\to \F$ by $W_i'(z):=S_{p^i}(Z)$. The following claim follows immediately from Lucas theorem \cite{Lucas}.
\begin{clm}\label{clm:lucas}
Let $z \in \{0,1\}^{rA}$. Then, $W_i(z)=W_i'(z)$ for $i=0,\ldots,k$.
\end{clm}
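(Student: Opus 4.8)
The plan is to unwind both sides on the hypercube $z \in \{0,1\}^{rA}$ and compare them as functions of the weight pattern. First I would observe that when $z \in \{0,1\}^{rA}$, each product $Z_i = \prod_{j=1}^A z_{i,j}$ is itself a $0/1$ value: $Z_i = 1$ precisely when $z_{i,1} = \cdots = z_{i,A} = 1$, and $Z_i = 0$ otherwise. Let $m = m(z) := |\{i \in [r] : Z_i = 1\}| = \sum_{i=1}^r Z_i$, an integer in $\{0,1,\ldots,r\}$. The quantity $\tilde h(z)$ depends only on $W = \sum_i Z_i \bmod p^{k+1} = m \bmod p^{k+1}$, and its base-$p$ digits are exactly the $W_i(z)$; that is, $W_i(z) \equiv \lfloor m / p^i \rfloor \bmod p$ (reduced into $\F$). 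So the left-hand side is the $i$-th base-$p$ digit of the integer $m(z)$.

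Next I would evaluate the right-hand side on the same $z$. Since the $Z_j$ are $0/1$-valued with exactly $m$ of them equal to $1$, the $\ell$-th elementary symmetric polynomial evaluates to $S_\ell(Z) = \binom{m}{\ell}$ as an integer, hence $W_i'(z) = S_{p^i}(Z) = \binom{m}{p^i} \bmod p$ in $\F$. Now Lucas' theorem \cite{Lucas} states that $\binom{m}{p^i} \bmod p$ equals the product of the base-$p$ digits of $p^i$ against the corresponding digits of $m$; since $p^i$ has base-$p$ expansion consisting of a single $1$ in position $i$ and $0$'s elsewhere, this product is exactly the $i$-th base-$p$ digit of $m$. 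Therefore $W_i'(z) = (i\text{-th base-}p\text{ digit of } m) = W_i(z)$ for every $z \in \{0,1\}^{rA}$ and every $0 \le i \le k$, which is the claim.

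The argument is essentially a bookkeeping exercise, so there is no serious obstacle; the only point that needs a little care is the bridge between the two sides — namely checking that $W_i(z)$, defined as a base-$p$ digit of $W = \sum Z_i$, agrees with $\lfloor m/p^i \rfloor \bmod p$ for $i \le k$ even though $W$ is reduced modulo $p^{k+1}$ rather than taken as the full integer $m$ (this is fine because reduction mod $p^{k+1}$ does not affect digits $0$ through $k$), and verifying that $S_{p^i}(Z)$ genuinely equals the integer $\binom{m}{p^i}$ when the $Z_j$ lie in $\{0,1\}$ before passing to the residue in $\F$. With those two observations in place, Lucas' theorem closes the gap immediately.
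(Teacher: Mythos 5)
Your proof is correct and follows essentially the same route as the paper: reduce to the $0/1$-valued $Z_i$, identify $W_i(z)$ with the $i$-th base-$p$ digit of the integer $\sum_i Z_i$, note $S_{p^i}(Z)=\binom{\sum Z_i}{p^i} \bmod p$ on the hypercube, and apply Lucas' theorem. Your write-up simply makes explicit the two bookkeeping points (digits $0,\ldots,k$ are unaffected by reduction mod $p^{k+1}$, and elementary symmetric polynomials of $0/1$ inputs are binomial coefficients) that the paper leaves implicit.
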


\begin{proof}
If $z \in \{0,1\}^{rA}$ then $Z \in \{0,1\}^r$. Lucas theorem implies that the $i$-th least significant digit (starting at $0$) of $W=Z_1+\ldots+Z_r$ in base $p$ is given by ${Z_1+\ldots+Z_r \choose p^i} \mod p = S_{p^i}(Z)$.
\end{proof}

For every polynomial $P \in \F[z]$, define $\ML(P)$ to be the multilinearization of $P$. That is, it is obtained by replacing each $z_{i,j}^a$ by $z_{i,j}$ for all $a \ge 1$ and all $i \in [r], j \in [A]$. Note that $\ML(P)(z)=P(z)$ for all $z \in \{0,1\}^{rA}$.

\begin{clm}\label{clm:multilinear}  Let $P,Q:\F^{rA} \rightarrow \F$ be two polynomials such that $P(z)=Q(z)$ for all $z \in \{0,1\}^{rA}$. Then $\ML(P)  \equiv \ML(Q)$. 
\end{clm}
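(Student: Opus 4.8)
The plan is to reduce to a single polynomial and then invoke the classical fact that a multilinear polynomial vanishing on the Boolean cube is the zero polynomial. First I would record that $\ML$ is an $\F$-linear operator on the polynomial ring $\F[z_{1,1},\ldots,z_{r,A}]$: every polynomial has a unique expansion in the monomial basis, $\ML$ is defined by replacing each factor $z_{i,j}^a$ ($a\ge 1$) by $z_{i,j}$ in each monomial, and this extends $\F$-linearly. Consequently $\ML(P)-\ML(Q)=\ML(P-Q)$, so it suffices to show $\ML(R)\equiv 0$ whenever $R:=P-Q$ vanishes on $\zo^{rA}$.

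Next I would observe that $\ML$ preserves evaluations on the cube: since $0^a=0$ and $1^a=1$ for every $a\ge 1$, replacing powers of variables by the variables themselves does not change the value of any monomial at a point of $\zo^{rA}$. Hence $\ML(R)(z)=R(z)=0$ for all $z\in\zo^{rA}$, so $\ML(R)$ is a multilinear polynomial that vanishes on the entire Boolean cube.

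The crux is then to deduce $\ML(R)\equiv 0$. Write $\ML(R)=\sum_{T\subseteq [r]\times[A]} c_T \prod_{(i,j)\in T} z_{i,j}$ with $c_T\in\F$, and suppose some $c_T\neq 0$. Pick such a $T$ minimal under inclusion and evaluate at the indicator point $\mathbf{1}_T\in\zo^{rA}$, where $z_{i,j}=1$ iff $(i,j)\in T$. The monomial indexed by $T'$ evaluates to $1$ if $T'\subseteq T$ and to $0$ otherwise; monomials with $T'\subsetneq T$ have coefficient $0$ by minimality of $T$, so $\ML(R)(\mathbf{1}_T)=c_T\neq 0$, contradicting that $\ML(R)$ vanishes on the cube. (Equivalently, the zeta matrix with $(T',T)$-entry $1_{\{T'\subseteq T\}}$ is triangular with unit diagonal, hence invertible over $\F$, forcing all $c_T=0$.) Therefore $\ML(R)\equiv 0$, i.e. $\ML(P)\equiv\ML(Q)$. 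I do not anticipate a genuine obstacle here; the only points requiring care are the well-definedness and $\F$-linearity of $\ML$ and the fact that it leaves values on $\zo^{rA}$ unchanged.
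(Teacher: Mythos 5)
Your proposal is correct and follows essentially the same route as the paper: reduce to $R=P-Q$ via linearity of $\ML$, note that multilinearization preserves values on $\{0,1\}^{rA}$, and conclude from the fact that a multilinear polynomial vanishing on the Boolean cube is identically zero. The only difference is that you spell out (via the minimal-monomial/zeta-matrix argument) the step the paper dismisses as "easy to see."
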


\begin{proof}
Let $n=rA$. It is easy to see that a multilinear polynomial $f: \F^n \rightarrow \F$ satisfies $f(z)=0$ for all $z \in \{0,1\}^n$ if and only if $f \equiv 0$. Therefore, for every polynomial $P:\F^n \rightarrow \F$, $\ML(P)$ is the unique multilinear polynomial that agrees with $P$ on $\{0,1\}^n$.  Let $R:\F^n \rightarrow \F$ be defined as $R:=P-Q$. Then by linearity, $\ML(R):\equiv \ML(P)-\ML(Q)$. As $\ML(R)=0$ for all $z \in \{0,1\}^n$, $\ML(R) \equiv 0$ which implies $\ML(P)\equiv \ML(Q)$.
\end{proof}

\begin{proof}[Proof of Lemma~\ref{lem:deg_coef}]
For $D=\sum_{i=0}^k p^i d_i$, define $$W^{(D)}(z):=\prod_{i=0}^k W_{i}(z)^{d_i}, \qquad W'^{(D)}(z):=\prod_{i=0}^k W_{i}'(z)^{d_i}.$$
By Claim~\ref{clm:lucas} and Claim~\ref{clm:multilinear}, we can define a common multilinearization of $W^{(D)}$ and $W'^{(D)}$ by
$$ M^{(D)}:=\ML\l(W^{(D)}\r)=\ML\l(W'^{(D)}\r).$$
Let $m'(z)=\prod_{i=1}^D Z_i
= \prod_{i=1}^D \prod_{j=1}^A z_{i,j}$ be a monomial. The coefficient of $m'$ in $W'^{(D)}$ is equal to the coefficient of  $\prod_{i=1}^D Z_i$ in $\prod_{i=0}^k S_{p^i}(Z)^{d_i}$, which is equal to the number of partitions of a set of size $D$ to $d_0$ sets of size $1$, $d_1$ sets of size $p$, $d_2$ sets of size $p^2$, up to $d_k$ sets of size $p^k$. This is given by
$$
\prod_{i=0}^k \prod_{j=1}^{d_i} {j p^i + d_{i+1} p^{i+1} + \ldots + d_k p^k \choose p^i},
$$
which by Lucas theorem is equal modulo $p$ to $\prod_{i=0}^k (d_i !) \ne 0 \mod p$.

Owing to the above, we have $\deg(M^{(D)})\leq \deg(W'^{(D)})=AD$. Also, since $m'(z)$ is of maximal degree, it also remains in $M^{(D)}$ after multilinearization.
Define $$\bar{f_1}(x,z):=\sum_{0 \le d_0,\ldots,d_k \le p-1} f_{d_0,\ldots,d_k}(x)M^{(D)}(z).$$
Then, we have $\deg(\bar{f_1}) \leq \deg(\tilde{f}_1) \leq d$.

Now, suppose that the lemma is false. Let $D=\sum p^i d_i$ be maximal such that $\deg(f_{d_0,\ldots,d_k}) > d - A D$. Note that $D$ corresponds to a unique tuple $(d_0,\ldots,d_k)$. Let $m(x)$ be any monomial in $f_{d_0,\ldots,d_k}(x)$ with maximal degree, and recall that $m'(z)=\prod_{i=1}^D Z_i
= \prod_{i=1}^D \prod_{j=1}^A z_{i,j}$. Hence, the monomial $m(x) m'(z)$, whose degree is larger than $d$, has a nonzero coefficient in $f_{d_0,\ldots,d_k}(x) M^{(D)}(z)$ as noted above. We will show it has a zero coefficient in any other $f_{d'_0,\ldots,d'_k}(x) M^{(D')}(z)$ with $(d'_0,\ldots,d'_k) \ne (d_0,\ldots,d_k)$, $D'=\sum_{i}p^i d_i'$ which will contradict the fact that $\deg(\bar{f}_1) \le d$.

So, let $(d'_0,\ldots,d'_k) \ne (d_0,\ldots,d_k)$ and let $D' = \sum p^i d'_i$. Note that necessarily $D' \ne D$. If $D' > D$ then by maximality of $D$, $\deg\l(f_{d'_0,\ldots,d'_k}\r) \le d-AD' < d-AD$ and hence $m(x)$ cannot appear in $f_{d'_0,\ldots,d'_k}(x)$. If $D' < D$ then $\deg\l(M^{(D')}\r) = AD' < AD$ and hence $m'(z)$ cannot appear in $M^{(D')}(z)$.
\end{proof}

Let $w=(w_0,\ldots,w_k) \in \F^{k+1}$ be new variables, and define $f'_1:\F^{n_1+k+1} \to \F$ by
\begin{equation}\label{eq:f1prime_def}
f'_1(x,w) = \Gamma'(x,w_0,\ldots,w_k) = \sum_{0 \le d_0,\ldots,d_k \le p-1} f_{d_0,\ldots,d_k}(x) \prod_{i=0}^k w_i^{d_i}.
\end{equation}

We next argue that $f'_1$ is also well approximated by $f_2$.

\begin{clm}
$\Pr[f'_1(x,w)=f_2(x)] \ge \Pr[\tilde{f}_1(x,z) = f_2(x)] - \eps/4 \ge 1-\de(e)+\eps/4$, where $x \in \F^{n_1}, z \in \F^{rA}, w \in \F^{k+1}$ are uniformly and independently distributed.
\end{clm}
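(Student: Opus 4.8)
The plan is to use the fact that $\tilde f_1$ is nothing but $f'_1$ with the free digit variables $w=(w_0,\ldots,w_k)$ substituted by the base-$p$ digits $W_0(z),\ldots,W_k(z)$ of $W(z):=\sum_{i=1}^r Z_i \bmod p^{k+1}$; indeed, by construction $\tilde f_1(x,z) = \Gamma'(x,W_0(z),\ldots,W_k(z)) = f'_1(x,W_0(z),\ldots,W_k(z))$, where $Z_i=\prod_{j=1}^A z_{i,j}$. So it suffices to show that, over a uniform $z \in \F^{rA}$, the tuple $(W_0(z),\ldots,W_k(z))$ is close in statistical distance to the uniform distribution on $\F^{k+1}$, and then to transfer the approximation guarantee of Corollary~\ref{cor:f2_approx_f1tilde} through this substitution.

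First I would note that the map sending an element of $\Z_{p^{k+1}}$ to its vector of base-$p$ digits is a bijection onto $\F^{k+1}$, so the law of $(W_0(z),\ldots,W_k(z))$ is exactly the pushforward of the law of $W(z)\bmod p^{k+1}$, equivalently of $\tilde h(z) \in \U_{k+1}$, under this measure-preserving bijection. Next I would invoke Claim~\ref{clm:h_htilde_uniform} (choosing the parameter $r$ large, which we are free to do), which asserts $\Pr_z[\tilde h(z)=\alpha] = p^{-(k+1)}(1\pm\eps/2)$ for every $\alpha \in \U_{k+1}$; summing these deviations shows that the total variation distance between $(W_0(z),\ldots,W_k(z))$ and the uniform distribution on $\F^{k+1}$ is at most $\tfrac12\cdot p^{k+1}\cdot p^{-(k+1)}\cdot \tfrac{\eps}{2} = \eps/4$.

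Finally, I would fix $x \in \F^{n_1}$ and apply this statistical-distance bound to the event $\{f'_1(x,\cdot)=f_2(x)\}$, which depends on $w$ only, to obtain
$$
\bigl|\Pr_z[\tilde f_1(x,z)=f_2(x)] - \Pr_w[f'_1(x,w)=f_2(x)]\bigr| \le \eps/4
$$
for every $x$; averaging over a uniform $x$ and using Corollary~\ref{cor:f2_approx_f1tilde} then gives $\Pr[f'_1(x,w)=f_2(x)] \ge \Pr[\tilde f_1(x,z)=f_2(x)] - \eps/4 \ge 1-\de(e)+\eps/2-\eps/4 = 1-\de(e)+\eps/4$. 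There is no real obstacle here; the only points requiring care are that the multiplicative $(1\pm\eps/2)$ error in Claim~\ref{clm:h_htilde_uniform} is calibrated so that it contributes exactly the $\eps/4$ loss we can afford, and that the base-$p$ digit decomposition is used as an honest measure-preserving bijection so that no distortion is introduced at that step.
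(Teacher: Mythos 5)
Your proposal is correct and follows essentially the same route as the paper: both deduce from Claim~\ref{clm:h_htilde_uniform} that the digit tuple $(W_0(z),\ldots,W_k(z))$, being the bijective image of $\tilde{h}(z)$, is within statistical distance $\eps/4$ of uniform on $\F^{k+1}$, and then transfer the bound from Corollary~\ref{cor:f2_approx_f1tilde}. You merely spell out the total-variation computation and the per-$x$ conditioning that the paper leaves implicit.
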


\begin{proof}
By Claim~\ref{clm:h_htilde_uniform}, the distribution of $\tilde{h}$ is $\eps/4$-close close in statistical distance to the uniform distribution over $\U_{k+1}$, hence the distribution of $(W_0(z),\ldots,W_k(z))$ is $\eps/4$-close in statistical distance to the uniform distribution over $\F^{k+1}$.
\end{proof}

To conclude the proof of Lemma~\ref{lem:main}, expand $f'_1-f_2$ as
$$
f'_1(x,w) - f_2(x) = \sum_{i=0}^{d'} c_i(x,w_0,\ldots,w_{k-1}) w_k^i
$$
where $c_i \in \F[x,w_0,\ldots,w_{k-1}]$, $d' \le \min(d,p-1)$ and $c_{d'} \ne 0$. We have that $d' \ge 1$ since $\Gamma'$ depends on $W_k(z)$. Also, by Lemma~\ref{lem:deg_coef}, for $i \ge 1$ we have $\deg(c_i) \le d - A p^k i$. To see this, suppose not. Consider the expansion in \eqref{eq:f1prime_def}. Then, for some $d_0,\ldots d_{k-1}$, $\deg(f_{d_0,\ldots d_{k-1},i})+\sum_{j=0}^{k-1}d_j > d-Ap^ki$, which implies that
$$
\deg(f_{d_0,\ldots d_{k-1},i}) >   d- \sum_{j=0}^{k-1}d_j - A p^k i \ge d - A \sum_{j=0}^{k-1} d_j p^j - A p^k i,
$$
which is a contradiction to Lemma~\ref{lem:deg_coef}. Hence
\begin{align*}
\Pr[f'_1(x,w)=f_2(x)] &\le \Pr[c_{d'}=0] + (1-\Pr[c_{d'}=0]) (1-\de(d')) \\
&\le 1 - \de(d-A p^k d') \de(d') \le 1 - \de(d-d' (A p^{k}-1)),
\end{align*}
where the last inequality was established in Claim~\ref{clm:sz1}. So, as we established that $\de(d-d' (A p^k-1)) < \de(e)$ and $d' \ge 1$ we must have $A p^k - 1 < d-e$, and hence $A p^k \le d-e$. Now, recall that $\deg(h)=\deg(\tilde{h})=A+(p-1)k$ and it is a simple exercise to verify that $A+(p-1)k \le A p^k$ for all $A \ge 1, k \ge 0$. We thus showed that $\deg(h) \le d-e$, as claimed.

\section{Open Problems}
Theorem~\ref{THM:main} and Theorem~\ref{THM:main2} establish that over any fixed prime field $\F_p$ and any fixed $e \le d$ and $\eps>0$, the number of degree $d$ polynomials in a any ball of radius $\de(e)-\eps$ is at most $\exp(c n^{d-e})$ for some $c=c(p,d,\eps)$, which in particular resolves the conjecture raised in \cite{GKZ08} when $e=d$.

However, the bounds on $c$ which we obtain are of Ackermann-type, which seem far from optimal. This leaves open the question of obtaining better bounds. This may require a different approach, as currently higher-order Fourier analysis does not seem to provide better bounds. We also leave as an open problem the question of extending our work to non-prime fields, and note that the missing ingredient is an extension of the higher-order Fourier analytic techniques to non prime fields.

\bibliographystyle{alpha}
\bibliography{listdecoding}

\end{document}